\newcommand{\wo}{\mathbb{E}}
\DeclareMathOperator{\cov}{Cov}
\newtheorem{theorem}{Theorem}
\newtheorem{proposition}[theorem]{Proposition}
\newtheorem{remark}[theorem]{Remark}
\DeclareMathOperator{\Var}{Var}   
\def\bP{\mathbb{P}}
\def\bE{\mathbb{E}}
\def\bR{\mathbb{R}}
\def\bN{\mathbb{N}}
\def\cF{\mathcal{F}}
\begin{document} 











\title[Statistical applications of the 20/60/20 rule]{Statistical applications  of the 20/60/20 rule in risk management and portfolio optimization}

\author{Kewin P\k{a}czek$^{\ast}$}
\author{Damian Jelito$^{\ast}$}
\author{Marcin Pitera$^{\ast}$}
\author{Agnieszka Wy\l{}oma\'{n}ska$^{\dagger}$}
\address{$^{\ast}$Institute of Mathematics, Jagiellonian University, S. {\L}ojasiewicza 6, 30-348 Krak{\'o}w, Poland}
\address{$^{\dagger}$Faculty of Pure and Applied Mathematics, Hugo Steinhaus Center, Wroc{\l}aw University of Science and Technology, Wyspia{\'n}skiego 27, 50-370 Wroc{\l}aw, Poland}
\email{kewin.paczek@im.uj.edu.pl}
\email{damian.jelito@uj.edu.pl}
\email{marcin.pitera@uj.edu.pl}
\email{agnieszka.wylomanska@pwr.edu.pl}

\vspace{-1cm}
\begin{abstract}
This paper explores the applications of the 20/60/20 rule—a heuristic method that segments data into top-performing, average-performing, and underperforming groups—in mathematical finance. We review the statistical foundations of this rule and demonstrate its usefulness in risk management and portfolio optimization. Our study highlights three key applications. First, we apply the rule to stock market data, showing that it enables effective population clustering. Second, we introduce a novel, easy-to-implement method for extracting heavy-tail characteristics in risk management. Third, we integrate spatial reasoning based on the 20/60/20 rule into portfolio optimization, enhancing robustness and improving performance. To support our findings, we develop a new measure for quantifying tail heaviness and employ conditional statistics to reconstruct the unconditional distribution from the core data segment. This reconstructed distribution is tested on real financial data to evaluate whether the 20/60/20 segmentation effectively balances capturing extreme risks with maintaining the stability of central returns. Our results offer insights into financial data behavior under heavy-tailed conditions and demonstrate the potential of the 20/60/20 rule as a complementary tool for decision-making in finance.
\vspace{0.2cm}

\noindent \textsc{Keywords:} {conditional statistics, robust statistics, conditional covariance, sample conditional covariance, risk management, Markowitz portfolio optimization}
\end{abstract}
\maketitle

\linespread{1.25}

\section{Introduction}
The 20/60/20 rule is a heuristic approach that offers a flexible framework for analyzing and modeling different types of data by segmenting observations according to the values of a given benchmark; see \cite{206020Rule1999, 206020Rule2009} for details. In a nutshell, one divides observations into three regions: 20\% in each of the extreme tails and 60\% in the central portion and deals with each subset separately; see \cite{BlackSwan,taleb_black_2008}. This segmentation reflects the varying statistical properties observed in different parts of the data and facilitates region-specific treatment, see Section~\ref{section:rule20/60/20} for details. In this article, we check if this reasoning could be efficiently applied to financial datasets in order to enhance existing quantitative frameworks. Specifically, we implement the 20/60/20 framework in finance, aiming to develop a more adaptive and refined approach to risk management and portfolio optimization.

Mathematically, the 20/60/20 rule can be understood within the framework of conditional statistics, which draws connections to the central limit theorem and datasets composed of independent observations driven by similar underlying innovations. This foundation allows for the analysis of conditional variance within each subset (20\%, 60\%, 20\%), enabling a deeper understanding of distributional properties. Conditional moments, in particular, have proven to be valuable for assessing goodness of fit, such as testing normality (see \cite{JelPit2021, Wozny2024GaussianDS}) or evaluating distributions within the general class of the $\alpha$-stable distributions (\cite{PitCheWyl2021, PacJelPitWyl2023}). Additionally, they facilitate robust parameter estimation in heavy-tailed environments, as demonstrated in \cite{PacJelPitWyl2022}. By leveraging these statistical tools, the 20/60/20 rule provides a structured approach to analyzing variability and patterns within segmented datasets.

In this article, we check if the 20/60/20 rule might be applied to finance into three different areas: direct visual inspection tools, distribution characteristic extraction, and spatial-based control decision-making. For each of those applications, we presented particular examples: visual clustering, measuring tail thickness, and portfolio optimization, respectively. 

First, our research suggests that the 20/60/20 rule can be useful for clustering financial data by dividing the observations into "tails" and the middle part. This approach bridges the gap between classical methods that treat whole samples homogeneously, and complex econometrics models, offering a practical tool for modern spatial-conditioning risk  management. Segmentation ensures that the heavy tails are not ignored but treated separately, enabling a clearer distinction between stable returns in the central portion and extreme risk in the tails. Surprisingly, we noticed that in many cases, the 60\% of middle observations of the stock market tend to exhibit normal behavior. What is more, this refers to both stock and index data, and is observed on different time horizons, so that it can be seen as a universal phenomenon.

Second, this approach introduces a novel measure of tail thickness for one-dimensional random variables. By calculating the second moment (or variance) conditioned on the subset of the middle 60\% of observations, we can extrapolate the conditional variance to estimate the unconditional variance, assuming homogeneous behavior across the central and tail subsets. Comparing this extrapolated conditional variance with the unconditional variance computed from the entire sample provides a meaningful measure of tail thickness. This comparison can reveal the presence of unexpected extreme tails or deviations from the assumed distribution, offering a valuable alternative to traditional measures like kurtosis or other dispersion metrics, see \cite{SHIN2025110325,joanes1998comparison,Staudte03042017} for detalis. Such a method highlights the potential for identifying subtle distributional characteristics that might otherwise be overlooked.

Third, we show that the 20/60/20 rule can be effectively applied to enhance or benchmark portfolio optimization models. Our work introduces a novel estimator for the covariance matrix, specifically tailored to the central portion of the data. This estimator modifies the classical plug-in Markowitz approach by focusing on the middle 60\% of returns, where risk can be modeled more reliably without the distortions introduced by extreme tail behavior. The rationale behind this approach is that while one cannot control or predict extreme events, it is more practical to build a model that performs well in standard market conditions. 

It is important to note that this is a toy exercise, designed to demonstrate the potential of applying the 20/60/20 rule to mathematical finance. While our examples show that this approach is plausible and offers promising insights, further refinement is needed to make it applicable to real-world strategies. For instance, future work could incorporate specific measures to control for extreme tail behavior, enhancing the model's robustness in more volatile environments.

This paper is organized as follows. In Section~\ref{section:rule20/60/20} we formulate the 20/60/20 rule and discuss its various applications with the special attention to financial markets. In Section~\ref{section:math} we present the mathematical foundation of the 20/60/20 rule and the related concepts of quantile conditional moments. In Section~\ref{s:statistical_setup} we extend the theoretical results from Section~\ref{section:math} to the sample data setting and define the corresponding estimators.
In Section~\ref{section:empirical} we evaluate the potential utility of newly defined statistics by analyzing their performance for financial data sets using three different examples. In Section~\ref{s:conclusions}, we conclude the paper with a comment on possible extensions of the proposed approaches.

\section{The 20/60/20 rule and its potential applications to finance}\label{section:rule20/60/20}
In resource allocation and management, various rules are used to divide resources into subsets. Common examples include the 20/60/20 rule, the 30/30/40 rule, and the 20/80 rule, see e.g. Section 3 in \cite{covey1995first}. Each of these rules is based on certain implicit assumptions about the data. For instance, as discussed by~\cite{strand2020,digeronimo2020}, the 30/30/40 rule (or the similar  20/60/20 rule) often assumes a more balanced and independent distribution of resources or tasks. In contrast, the 20/80 rule aligns with Pareto optimality principles (see \cite{Pareto1896} for details), emphasizing selective attention on top-performing elements to drive significant gains; we refer to \cite{Marshall2013,Newman01092005} for further insights. However, these are primarily rules of thumb, where the specific percentages are less important and often arbitrary. What truly matters is the underlying principle: different subsets require different approaches, and tailoring strategies accordingly can lead to better overall outcomes. The key takeaway is not the exact split but the recognition that applying distinct methods to different groups can enhance results. In this article, we focus on the 20/60/20 rule due to its mathematical justification and our strong belief in its applicability to finance.

The 20/60/20 rule is a widely recognized concept in management science. It is used to segment groups based on their performance or impact; see \cite{Kamiya2005WorkerAR, 206020Rule2009} and references therein. According to this rule, when a predefined benchmark can measure effectiveness, a population can be divided into three distinct groups based on their level of performance. 
Following \cite{206020Rule1999}, we can apply this rule in the human resource management setting, and divide employees into three groups:
\begin{itemize}

     \item Positive group: They are the best performers with respect to the benchmark criterion value. This can relate to the top 20\% of employees whose productivity, engagement, or effectiveness markedly surpasses the norm. These members drive positive change, innovation, and efficiency and they are often considered key assets in achieving organizational goals.

    \item Neutral group: Representing the middle 60\%, this group demonstrates average performance. As the largest segment of the population, they tend to exhibit neutral or stable behavior. In the workplace, these individuals are generally consistent but neither highly productive nor harmful. Their impact on overall outcomes can vary, often requiring support or guidance to foster positive engagement and enhance productivity.

    \item Negative group: The bottom 20\% generally consists of individuals or entities that have a detrimental effect on the overall objective. In the context of workplace productivity, this might include employees who consistently underperform or engage in counterproductive behaviors. The negative group's impact often results in reduced efficiency, decreased morale, and, in some cases, a need for corrective action or realignment.
\end{itemize}

Although not inherently mathematical, the 20/60/20 framework, similar to, mentioned before, the 20/80 rule, provides a pragmatic tool for managing complexity. The numerical splits, whether derived from normality or tailored to specific contexts, underline the importance of differentiated strategies for optimizing outcomes. However, further investigation is needed to validate the effectiveness of such split-control approaches.
The popularity of the 20/60/20 heuristic stems from its simplicity and broad applicability across various domains. It serves as a useful benchmark for analyzing observed patterns of human behavior and performance, particularly within organizational contexts. For example, from a psychological perspective, the 20/60/20 division can be linked with natural variations in human traits and behavioral tendencies; see \cite{RyanMotivation}. When assessing populations that follow a multivariate normal distribution, such as in psychometric or workforce assessments, the clustering effect of high, average, and low performers often emerges naturally; see \cite{McGreagor1960} for details. This division aligns with concepts in social psychology and behavioral economics, where populations rarely conform to purely homogeneous standards and instead exhibit diverse levels of motivation, engagement, and productivity. In statistics, the 20/60/20 rule can be also used to efficiently test the normality, see \cite{JelPit2021} for more details.
The 20/60/20 rule, though simplistic, serves as a starting point for nuanced decision-making strategies across various domains. The key aspect of the rule is not the splitting itself, but the fact that different approaches should be applied to different groups, which might lead to better results.

The 20/60/20 rule, while not widely recognized in quantitative finance, has potential applications across various financial domains. Despite its heuristic nature and lack of rigorous theoretical foundation, it is often used as a practical guideline. For instance, in budgeting, it suggests allocating 60\% of income to living expenses, 20\% to savings, and 20\% to discretionary spending; see \cite{206020Rule2024} for details. Beyond budgeting, we propose that the 20/60/20 rule can be effectively applied to other areas of finance, serving as a versatile tool for data segmentation, portfolio management, and trading strategies.

One promising application of the 20/60/20 rule is linked to a clustering mechanism, where data is segmented into three distinct categories based on a predefined benchmark. Probabilistic clustering models, which are foundational to modern machine learning, provide a robust framework for categorizing data according to their statistical properties. These models are particularly valuable in finance, where datasets often exhibit diverse distributions and high levels of uncertainty. Techniques such as Gaussian Mixture Models (GMMs) and Hidden Markov Models (HMMs) are widely used to identify different market regimes in financial time series data; please see \cite{Hamilton1990, Hastie2009,Alexander2009} for details. Additionally, probabilistic clustering aids in anomaly detection within transaction data, supporting fraud detection efforts, we refer to \cite{Bolton2002} for details. By integrating data-driven insights with the 20/60/20 rule, these methods can help design financial strategies tailored to specific risk profiles and market conditions. For further details on clustering techniques, refer to Chapter 3 in \cite{Clustering2014}.

In portfolio management, the 20/60/20 rule can be applied to prioritize investments. For example, investors could focus on the top 20\% of high-performing assets, actively monitor the middle 60\% for potential growth opportunities, and consider divesting or re-evaluating the bottom 20\% of underperforming investments. Similarly, in trading system algorithms, the rule can be used to identify profitable investment opportunities. For instance, it might be advantageous to invest in stocks where the variance in the tails is relatively small compared to the variance in the middle 60\% of observations. Additionally, by analyzing daily trading volume and applying the 20/60/20 rule, stocks can be categorized into three groups, providing insights into which stocks are more popular or liquid.

In this paper, we present three simple yet effective approaches to applying the 20/60/20 rule in finance. While these approaches demonstrate its utility, we firmly believe that the rule has a broader range of potential applications, and we hope this work inspires further exploration of its use in financial modeling and decision-making.

\section{Mathematical setup}\label{section:math}

In this section, we introduce a formal mathematical setup and provide more details on the structure of the conditional covariance matrix for multivariate normal distributions. The key purpose of this section is to show that the 20/60/20 rule induces a specific form of population balance under the assumption of multivariate normality, and to demonstrate how to expand conditional variance to an unconditional setup in order to measure the divergence from the normality-induced balance.

\subsection{Conditional covariance matrices}

Let $(\Omega,\cF,\bP)$ be a probability space. For a fixed $d\in\bN$, let $X=(X_1,\dots,X_d)$ be an absolutely continuous $d$-dimensional vector with the square-integrable coordinates, the cumulative distribution function (CDF) denoted by $F_X$, and the probability density function (PDF) denoted by $f_X$. We fix a strictly positive loading factor vector $a:=(a_1,\ldots,a_d)\in\bR_+^d$ and use
\begin{equation}\label{eq:Y}
Y:=\sum_{k=1}^d a_kX_k
\end{equation}
to denote the linear combination of margins of $X$; we often refer to $Y$ as a {\it benchmark} for $X$ and use $F_Y$, $f_Y$ to denote CDF and  PDF of $Y$, respectively. We also use $(\cdot)^T$ to denote vector transposition. Note that we can write $Y=\langle a,X\rangle$ with the Euclidean scalar product $\langle\cdot,\cdot\rangle$. Next, we fix a pair of quantiles $p,q\in (0,1)$, where $p<q$, and define the benchmark quantile conditional set
\begin{equation}\label{eq:B.set}
B := [F^{-1}_Y(p),F^{-1}_Y(q)],
\end{equation}
where $F^{-1}_Y$ is the inverse of the CDF of $Y$. For $i,j=1,\ldots, d$, the corresponding 
{\it quantile conditional variance} and {\it quantile conditional covariance} are given by
\begin{align}
    \Var(X_i|Y\in B) &:= \bE[(X_i-\bE[X_i|Y\in B])^2|Y\in B],\label{eq:cond_var}\\
    \cov(X_i,X_j|Y \in B) &:= \bE\left[(X_i-\bE[X_i|Y\in B])  (X_j-\bE[X_j|Y\in B])|Y \in B\right].\label{eq:cond_cov}
\end{align} 
Note that all objects are well defined and finite since $\bP[Y \in B]>0$, $p,q\in (0,1)$, and $X$ is in $L^2$. Also, with a slight abuse of notation, for a continuous, $L^2$ random variable $Z$ we use 
$\Var(Z|Y \in B)$ and $\cov(X_i,Z|Y \in B)$ to denote conditional 
variance and covariance of $Z$ on the set $\{Y\in B\}$. For brevity, we also use vector notation
\begin{align*}
    \bE[X|Y\in B] &:=(\bE[X_1|Y\in B],\ldots,\bE[X_d|Y\in B])\\
     \Var(X|Y \in B) &:=(\Var(X_1|Y \in B),\ldots,\Var(X_d|Y \in B)),\\
    \cov(X,Z|Y \in B) &:= (\cov(X_1,Z| Y \in B),\ldots,\cov(X_d,Z | Y \in B),\\
 \cov(X,Z) &:= (\cov(X_1,Z),\dots,\cov(X_d,Z)),
 \end{align*}
and use $\cov(X| Y\in B)$ to denote the (conditional) variance-covariance matrix of $X$.
As usual, we use $\Phi$ and $\phi$ to denote CDF and PDF of the standard normal random variable. We conclude this subsection by introducing the notation for the conditional variance of the standard normal random variable. More specifically, for quantile splits $p,q\in (0,1)$ we define
\begin{equation}\label{eq:s}
s(p,q) := \left(\frac{\Phi^{-1}(p)\phi(\Phi^{-1}(p))-\Phi^{-1}(q)\phi(\Phi^{-1}(q)) }{q-p} \right) - \left( \frac{(\phi(\Phi^{-1}(p))-\phi(\Phi^{-1}(q)))^2}{(q-p)^2}\right) +1.
\end{equation}
Note that, assuming $Z\sim \mathcal{N}(0,1)$, we get $s(p,q)=\Var(Z|Z \in B)$; see Section 13.10.1 in~\cite{JohKotBal1994} for the proof. Finally, please note that the assumption of square-integrability of $X$ is introduced for simplicity and guarantees that the conditional first and second moments are well-defined. Still, many of the results could be generalized as the conditioning often leads to the effective boundedness of the underlying random variables; see \cite{PitCheWyl2021}.

\subsection{Mathematical justification of 20/60/20 rule}\label{s:intro.math}
In this subsection, we recall the result that provides a mathematical justification for the 20/60/20 rule.  Throughout this subsection, we assume that $X$ follows multivariate $d$-dimensional normal distribution with mean $\mu\in\bR^{d}$ and covariance matrix $\Sigma$. For brevity, we use the notation $X \sim \mathcal{N}_d(\mu, \Sigma)$. Therefore, the benchmark random variable $Y$ is a one-dimensional normal random variable. 
As in the previous subsection, we condition $X$ based on the values of $Y$ with the help of the following fixed conditioning sets
\begin{equation}\label{eq:A1A2A3}
B_1:=\left(-\infty,F_Y^{-1}(\tilde q)\right],\qquad B_2:=\left(F_Y^{-1}(\tilde q),F^{-1}(1-\tilde q)\right]\qquad B_3:=\left(F_Y^{-1}(1-\tilde q),\infty\right],
\end{equation}
where $\tilde{q}:=\Phi(\tilde{x})\approx 0.19808$,  and $\tilde{x}$ is a unique positive solution to the equation 
\begin{equation*}
 -x \Phi(x)-\phi(x)(1-~2 \Phi(x))=~0,   
\end{equation*}
 see Theorem 1 in \cite{JawPit2014} for details. The sets $B_1$, $B_2$, and $B_3$ correspond to approximately top $20\%$, middle $60\%$, and lower $20\%$ values of $Y$. In Theorem 3.1 in~\cite{JawPit2015}, it has been observed that the sets $\{Y\in B_1\}$, $\{Y\in B_2\}$, and $\{Y\in B_3\}$ creates a spatial balance of a random vector $X$ with respect to the conditional covariance matrix. The 20/60/20 split is, in fact, a covariance equilibrium invariant with respect to the choice of all underlying parameters, i.e. $d$, $\mu$, $\Sigma$, and $a$. This is formalized in the following theorem. 
\begin{theorem}[\cite{JawPit2015}]\label{th:JawPit}
Let $X\sim \mathcal{N}_d(\mu,\Sigma)$, $a\in\bR^d_{+}$, and $Y=\langle a,X\rangle$. Then, we have
\begin{equation}\label{eq:206020}
\cov(X\mid Y\in B_1)=\cov(X\mid Y\in B_2)=\cov(X\mid Y \in B_3),
\end{equation}
where $B_1, B_2, B_3$ are given in $\eqref{eq:A1A2A3}$. 
\end{theorem}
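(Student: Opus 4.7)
The plan is to reduce the multivariate covariance identity to a one-dimensional statement about the truncated variance of the benchmark $Y$, using the Gaussian regression decomposition of $X$ on $Y$.

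First, since $(X, Y)$ is jointly Gaussian, the standard conditioning of a multivariate normal yields the decomposition
\begin{equation*}
X = \mu + \frac{\gamma}{\sigma_Y^2}(Y - \bE Y) + Z, \qquad \gamma := \Sigma a, \quad \sigma_Y^2 := a^T \Sigma a,
\end{equation*}
where $Z$ is a centered Gaussian vector independent of $Y$ with covariance $\Sigma - \gamma\gamma^T/\sigma_Y^2$. Because $Z$ is independent of $Y$, conditioning on any event $\{Y \in B\}$ leaves the law of $Z$ unchanged and makes the two components uncorrelated under the conditional measure, so an elementary expansion gives
\begin{equation*}
\cov(X \mid Y \in B) \;=\; \frac{\gamma\gamma^T}{\sigma_Y^4}\, \var(Y \mid Y \in B) \;+\; \Sigma \;-\; \frac{\gamma\gamma^T}{\sigma_Y^2}.
\end{equation*}
Only the scalar factor $\var(Y \mid Y \in B)$ depends on $B$, so the desired matrix equality is equivalent to the scalar identity
\begin{equation*}
\var(Y \mid Y \in B_1) \;=\; \var(Y \mid Y \in B_2) \;=\; \var(Y \mid Y \in B_3).
\end{equation*}

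Next I would handle this one-dimensional problem by standardizing $Y$. Setting $Z^{\ast} := (Y - \bE Y)/\sigma_Y \sim \mathcal{N}(0,1)$ and $\tilde x := \Phi^{-1}(1-\tilde q) > 0$, the sets $B_1, B_2, B_3$ correspond after standardization to $\{Z^{\ast} \leq -\tilde x\}$, $\{|Z^{\ast}| \leq \tilde x\}$, and $\{Z^{\ast} > \tilde x\}$. The equality $\var(Y \mid Y \in B_1) = \var(Y \mid Y \in B_3)$ is immediate from the symmetry $z \mapsto -z$ of the standard normal density. For the middle-versus-tail equality, substituting the truncated-normal moment formulas that underlie the definition of $s(p,q)$ in~\eqref{eq:s} produces closed expressions of the form
\begin{equation*}
\var(Z^{\ast} \mid Z^{\ast} > \tilde x) \;=\; 1 + \frac{\tilde x\, \phi(\tilde x)}{1-\Phi(\tilde x)} - \left(\frac{\phi(\tilde x)}{1-\Phi(\tilde x)}\right)^{\!2},
\end{equation*}
\begin{equation*}
\var(Z^{\ast} \mid |Z^{\ast}| \leq \tilde x) \;=\; 1 - \frac{2\tilde x\, \phi(\tilde x)}{2\Phi(\tilde x) - 1}.
\end{equation*}
Equating these and clearing denominators should yield, after regrouping, precisely the transcendental equation $-x\Phi(x) - \phi(x)(1 - 2\Phi(x)) = 0$ cited in Section~\ref{s:intro.math}, whose unique positive root is $\tilde x$ by Theorem~1 in \cite{JawPit2014}.

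The main obstacle I anticipate is the algebraic bookkeeping in the final simplification: one has to reduce two expressions with incompatible denominators $(1-\Phi(\tilde x))^2$ and $2\Phi(\tilde x)-1$ to a common polynomial form in $\phi(\tilde x)$ and $\Phi(\tilde x)$, using $\Phi(-\tilde x) = 1-\Phi(\tilde x)$, and then recognize the resulting identity as the defining equation for $\tilde x$. Once the identification is made, existence and uniqueness of the positive root are supplied by \cite{JawPit2014}, and the numerical value $\tilde q \approx 0.19808$ is inherited automatically; no additional analytic input beyond the Gaussian regression decomposition and the truncated moment formulas is required.
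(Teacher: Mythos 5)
Your argument is correct and is essentially the canonical route to this result: the paper itself imports the theorem from \cite{JawPit2015} without proof, but the regression decomposition $X=\mu+\gamma(Y-\bE Y)/\sigma_Y^2+Z$ with $Z$ independent of $Y$ is exactly the device the paper uses in its own proof of Theorem~\ref{th:OurTh}, and it correctly reduces the matrix identity to the equality of the three truncated variances of a standard normal, with $B_1$ versus $B_3$ handled by symmetry. One sign caution on the final step: equating your two truncated-variance formulas simplifies to $\tilde x\,(1-\Phi(\tilde x))=\phi(\tilde x)\,(2\Phi(\tilde x)-1)$ for your positive cut-point $\tilde x=\Phi^{-1}(1-\tilde q)$, which is the paper's displayed equation evaluated at $-\tilde x$ rather than at $\tilde x$; since the paper's own statement (a positive root $\tilde x$ with $\Phi(\tilde x)\approx 0.198$) is internally inconsistent, your identification is right up to this sign convention, but you should state it explicitly when "recognizing" the defining equation.
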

The 20/60/20 rule has been identified as a versatile tool that could be applied in developing estimation and testing frameworks for statistical hypotheses, see e.g. \cite{JelPit2021, Wozny2024GaussianDS}. These applications underscore the broad relevance and utility of the 20/60/20 rule, demonstrating its potential to enhance both theoretical insights and practical implementations of statistical methods. In this paper, we expand the scope of its applications by exploring its use in three key areas: as a clustering tool, as a measure of heavy-tailed behavior, and as a framework for refining portfolio optimization strategies. 

\subsection{Central set covariance expansion}
In this subsection, we show how to recover the unconditional covariance from its conditional version. The key result of this section, Theorem~\ref{th:OurTh},  could be seen as a dual statement to the one presented in Theorem 4.1 in~\cite{JawPit2017}, where the formula for the conditional covariance has been provided.
\begin{theorem}\label{th:OurTh} 
Let $X\sim \mathcal{N}_d(\mu,\Sigma)$, $a\in\bR_+^d$, and $Y=\langle a,X\rangle$. Then, for any $0<p<q<1$, and the related set $B$  given in~\eqref{eq:B.set}, we get 
\begin{eqnarray}\label{eq:cond.cov.est}
         \cov(X) = \cov(X|Y \in B)- (1 - \tfrac{1}{s(p,q)})\Var(Y|Y \in B)\left(\frac{\cov(X,Y|Y\in B)}{\Var(Y|Y \in B)}\right) \left(\frac{\cov(X,Y| Y \in B)}{\Var(Y| Y\in B)}\right)^T.
\end{eqnarray}
\end{theorem}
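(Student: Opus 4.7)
The plan is to reduce the statement to a one-dimensional Gaussian regression-type identity by exploiting the joint normality of $(X_1,\ldots,X_d,Y)$. Concretely, for each $i=1,\ldots,d$, I would write the standard Gaussian regression decomposition
\begin{equation*}
X_i = \mu_i + \beta_i (Y-\bE Y) + \varepsilon_i, \qquad \beta_i := \frac{\cov(X_i,Y)}{\Var(Y)},
\end{equation*}
where $\varepsilon_i$ is a zero-mean Gaussian residual. Because $(X,Y)$ is jointly normal, the uncorrelatedness $\cov(\varepsilon_i,Y)=0$ upgrades to independence of $\varepsilon_i$ and $Y$. The same is true for the joint vector $(\varepsilon_1,\ldots,\varepsilon_d)$, which is jointly normal and independent of $Y$. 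This is the structural fact that makes the proof short.

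Next, I would compute both sides of the target identity in terms of $\beta_i$, $\cov(\varepsilon_i,\varepsilon_j)$, and the relevant variances of $Y$. Unconditionally,
\begin{equation*}
\cov(X_i,X_j) = \beta_i\beta_j \Var(Y) + \cov(\varepsilon_i,\varepsilon_j).
\end{equation*}
Conditionally on $\{Y\in B\}$, independence of $(\varepsilon_i,\varepsilon_j)$ from $Y$ gives $\cov(\varepsilon_i,\varepsilon_j\mid Y\in B)=\cov(\varepsilon_i,\varepsilon_j)$ and $\cov(\beta_iY,\varepsilon_j\mid Y\in B)=0$, so
\begin{equation*}
\cov(X_i,X_j\mid Y\in B) = \beta_i\beta_j \Var(Y\mid Y\in B) + \cov(\varepsilon_i,\varepsilon_j).
\end{equation*}
Subtracting these two lines and using the analogous identity $\cov(X_i,Y\mid Y\in B) = \beta_i \Var(Y\mid Y\in B)$ (again by independence), I would express $\beta_i$ in purely conditional terms,
\begin{equation*}
\beta_i = \frac{\cov(X_i,Y\mid Y\in B)}{\Var(Y\mid Y\in B)}.
\end{equation*}

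To close the argument, I need to replace $\Var(Y)$ by a function of $\Var(Y\mid Y\in B)$. Since $Y$ is one-dimensional normal, the standardization $(Y-\bE Y)/\sqrt{\Var(Y)}$ is standard normal and the conditioning set $B$ is mapped to $[\Phi^{-1}(p),\Phi^{-1}(q)]$, so the identity $s(p,q)=\Var(Z\mid Z\in[\Phi^{-1}(p),\Phi^{-1}(q)])$ recalled after \eqref{eq:s} yields
\begin{equation*}
\Var(Y\mid Y\in B) = s(p,q)\,\Var(Y).
\end{equation*}
Plugging this into the previous subtraction gives $\Var(Y)-\Var(Y\mid Y\in B) = -(1-1/s(p,q))\Var(Y\mid Y\in B)$, and collecting everything in matrix form reproduces \eqref{eq:cond.cov.est}.

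I do not expect a serious obstacle: the only care point is the passage from uncorrelatedness to independence of $\varepsilon_i$ and $Y$, which is why joint normality is used essentially rather than just $L^2$ structure, and the clean scaling $\Var(Y\mid Y\in B)=s(p,q)\Var(Y)$, which requires that $Y$ itself be univariate normal (guaranteed here because $Y=\langle a,X\rangle$ with $X$ jointly Gaussian). Once these two points are in place, the computation is purely algebraic.
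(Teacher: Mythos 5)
Your proposal is correct and follows essentially the same route as the paper's proof: the Gaussian regression decomposition $X=\beta Y+\varepsilon$ with $\varepsilon$ independent of $Y$, the identity $\beta_i=\cov(X_i,Y\mid Y\in B)/\Var(Y\mid Y\in B)$, and the scaling $\Var(Y\mid Y\in B)=s(p,q)\Var(Y)$ are exactly the ingredients used there. Your version merely computes $\cov(X_i,X_j\mid Y\in B)$ directly from the decomposition instead of expanding $\cov(\varepsilon_i,\varepsilon_j\mid Y\in B)$ back out, which is a cosmetic simplification rather than a different argument.
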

\begin{proof}
The proof of Theorem~\ref{th:OurTh} is based on the arguments used in the proof of Theorem 4.1 in ~\cite{JawPit2017}. 
Let $\beta=(\beta_1,\ldots,\beta_d) := \frac{\cov(X,Y)}{\Var(Y)}$ and $\varepsilon=(\varepsilon_1,\ldots,\varepsilon_d):=\beta Y-X$. Since $(X,Y,\varepsilon)$ is a multivariate random vector, the simple linear regression formula $X=\beta Y+\varepsilon$ is satisfied. In particular, for any $i \in \{1,\ldots,d\}$, we have
\begin{equation}\label{eq:cov.eps.Y}
\cov(\varepsilon_i,Y) =0,
\end{equation}
and the uncorrelation implies independence due to the multivariate normality of vector $(\varepsilon_i,Y)$. Now, let us show that, for $i,j=1,\ldots d$, we get
\begin{equation}\label{eq:th.1}
\cov(X_i,X_j) = \cov(X_i,X_j|Y\in B) - \beta_i\beta_j (\Var(Y|Y\in B) - \Var(Y)).
\end{equation}
Note that \eqref{eq:th.1} could be rewritten in a matrix form as 
\begin{equation*}
\cov(X) =  \cov(X|Y\in B) - (\Var(Y|Y\in B) - \Var(Y))\beta\beta^{T}.
\end{equation*}
For a fixed $i,j\in \{1,\ldots,d\}$, we get
\begin{equation}\label{eq:covXX}
\cov(X_i,X_j)=\cov(\beta_iY - \varepsilon_i,\beta_jY - \varepsilon_j) =\beta_i\beta_j\Var(Y) + \cov(\varepsilon_i,\varepsilon_j) - \beta_i\cov(Y,\varepsilon_j) - \beta_j\cov(\varepsilon_i,Y). 
\end{equation}
Due to Equation~\eqref{eq:cov.eps.Y}, the last two components are equal to zero. Therefore, because of the independence $Y$ and $\varepsilon_i$, we can write
\begin{align}\label{eq:conXXFinal}
\cov(X_i,X_j) &=\beta_i\beta_j\Var(Y) + \cov(\varepsilon_i,\varepsilon_j)
= \beta_i\beta_j\Var(Y) + \cov(\varepsilon_i,\varepsilon_j|Y\in B) \nonumber \\
&=\beta_i\beta_j\Var(Y) + \cov(\beta_iY - X_i,\beta_jY - X_j|Y\in B) \nonumber \\
&= \cov(X_i,X_j|Y\in B)+ \beta_i\beta_j \left(\Var(Y) + \Var(Y|Y\in B) \right) \nonumber\\
&\phantom{=}- \beta_i\cov(Y,X_j | Y \in B)-\beta_j\cov(X_i,Y|Y\in B).
\end{align}
Next, using again the independence of $Y$ and $\varepsilon_i$, we have
\begin{equation}\label{eq:th1:2}
\beta_i= \beta_i \frac{\cov(Y,Y|Y \in B)}{\Var(Y|Y \in B)} + \frac{\cov(\varepsilon_i,Y|Y\in B)}{\Var(Y|Y \in B)} = \frac{\cov(\beta_i Y+\varepsilon_i,Y|Y \in B)}{\Var(Y|Y\in B)} = \frac{\cov(X_i,Y|Y\in B)}{\Var(Y|Y\in B)}.
\end{equation}
Consequently, plugging \eqref{eq:th1:2} into \eqref{eq:conXXFinal} we get
\begin{align*}
\cov(X_i,X_j)
 &= \cov(X_i,X_j|Y\in B) + \beta_i\beta_j \left(\Var(Y) + \Var(Y|Y\in B) \right) \\
 &\phantom{=} -2\frac{\cov(X_i,Y|Y\in B)\cov(X_j,Y|Y \in B)}{\Var(Y|Y\in B)} \\
 &= \cov(X_i,X_j|Y\in B) + \beta_i\beta_j \left(\Var(Y) + \Var(Y|Y\in B) \right)  - 2 \beta_i\beta_j \Var(Y|Y\in B) \\
 &= \cov(X_i,X_j|Y\in B) -\left( \Var(Y|Y\in B) -\Var(Y)\right)\beta_i\beta_j.
\end{align*}
Finally, noting that $(\Var(Y|Y\in B) - \Var(Y)) = (1-\frac{1}{s(p,q)}) \Var(Y|Y\in B)$ where $s(p,q)$ is given by~\eqref{eq:s}, we conclude the proof.
\end{proof}
Theorem~\ref{th:OurTh} is, in fact, true for any generic measurable set $B$, with $s(p,q)$ replaced by the appropriate normalizing factor. Still, to streamline the presentation, we decide to state the result and the proof only for quanitle conditioning sets.    In particular, Equation~\eqref{eq:th1:2} shows that the ratio of variance to covariance (that is, regression coefficient) is proportional for any conditioning set $B$.

Theorem \ref{th:OurTh} demonstrates that for the normal distribution it is possible to recover the unconditional covariance using the middle 60\% of the distribution. Thus, it can be used as a base for a measure of heavy-tailed behavior by comparing the unconditional distribution of a given random variable with the recovered normal distribution. 
To evaluate the practicality of this approach, it is necessary to establish a formal statistical framework along with the corresponding estimators and test statistics.


\section{Statistical setup}\label{s:statistical_setup}

 In this section, we introduce the statistical setup and show how to estimate the statistics defined in Equations \eqref{eq:cond_var}, \eqref{eq:cond_cov}, and \eqref{eq:cond.cov.est}. It is important to note that in this section we do not assume the normality of the random vector $X$. For a fixed sample size $n\in\bN\setminus \{0\}$, we use
\begin{equation}\label{eq:sample}
\mathbf{X}:=(X_{1,k},\ldots,X_{d,k})_{k=1}^n
\end{equation}
to denote an $n$-element independent identically distributed (i.i.d.) sample from the distribution of $X$.  For brevity, for $i=1,\ldots,d$, we also use $\mathbf{X}_i$ to denote $i$th margin sub-sample of $\mathbf{X}$, that is, we set $\mathbf{X}_i:=(X_{i,k})_{k=1}^n$. Similarly, we set
\begin{equation} \label{eq:sample_Y}
    \mathbf{Y}:=(Y_k)_{k=1}^{n},
\end{equation}
where $Y_k:=\sum_{i=1}^da_i X_{i,k}$, to denote a benchmark sample from $Y$ with a fixed vector $a$.


Given a sample $\mathbf{X}$ and the induced benchmark sample $\mathbf{Y}$, we define the empirical version of the conditioning set $B$ by setting 
\begin{equation}\label{eq:set_B.hat}
\hat{B} :=[Y_{([np]+1)},Y_{([nq])}],
\end{equation}
where we use ”(k)” to denote the index corresponding to $kth$ order statistic of the sample $\mathbf{Y}$ and $[b]:=\sup\{k\in \mathbb{Z}\colon k\leq b\}$ to denote the integer part of $b\in \mathbb{R}$.
Let us note that in such a case, the intervals are random, that is, the value $\hat{B}$ depends on a specific realization of the underlying sample.
Next, using \eqref{eq:set_B.hat}, we define the sample conditional mean, along with the sample equivalents of \eqref{eq:cond_var} and \eqref{eq:cond_cov}. Namely, we set
\begin{align}
    \textstyle \hat\mu_B(\mathbf{X_i})&\textstyle :=\frac{1}{\sum_{k=1}^n 1_{\hat{B}}(Y_k) } \sum_{k=1}^n X_{i,k} 1_{\hat{B}}(Y_k), \label{eq:cond_exp_emp}\\
    \textstyle\hat\sigma^2_B(\mathbf{X_i}) &\textstyle:= \frac{1}{\sum_{k=1}^n 1_{\hat{B}}(Y_k)} \sum_{k=1}^n (X_{i,k} - \hat\mu_B(\mathbf{X_i}))^2  1_{\hat{B}}(Y_k),\label{eq:cond_var_emp}\\
        \textstyle\widehat\cov_B(\mathbf{X_i},\mathbf{X_j}) &\textstyle:= \frac{1}{\sum_{k=1}^n 1_{\hat{B}}(Y_{k})} \sum_{k=1}^n (X_{i,k} - \hat\mu_B(\mathbf{X_i})) (X_{j,k} - \hat\mu_B(\mathbf{X_j})) 1_{\hat{B}}(Y_k).\label{eq:cond_cov_emp_X}
\end{align}
As in Section~\ref{s:intro.math}, we also define vector estimators and respective conditional statistics on $\mathbf{Y}$. More specifically, we set
\begin{align*}
\textstyle\hat\mu_B(\mathbf{Y}) & \textstyle:=\frac{1}{\sum_{k=1}^n 1_{\hat{B}}(Y_k) } \sum_{k=1}^n Y_{k} 1_{\hat{B}}(Y_k),\\
\textstyle\hat\sigma^2_B(\mathbf{X}) &\textstyle =(\hat\sigma^2_B(\mathbf{X}_1),\ldots, \hat\sigma^2_B(\mathbf{X}_d)), \\
\textstyle \hat\sigma^2_B(\mathbf{Y})&:=\textstyle \sum_{i,j = 1}^d a_ia_j \widehat\cov_B(\mathbf{X_i},\mathbf{X_j}) \\
\textstyle \widehat\cov_B(\mathbf{X_i},\mathbf{Y})&\textstyle := \frac{1}{\sum_{k=1}^n 1_{\hat{B}}(Y_k)} \sum_{k=1}^n (X_{i,k} - \hat\mu_B(\mathbf{X_i})) (Y_k - \hat\mu_B(\mathbf{Y})) 1_{\hat{B}}(Y_k).
\end{align*} 

The estimators presented in equations \eqref{eq:cond_exp_emp}--\eqref{eq:cond_cov_emp_X} are constructed by selecting a sub-sample based on empirical quantiles and then applying standard estimators to compute the desired statistic. This can be seen by observing that
\begin{align*}
    \textstyle\hat\mu_{B}(\mathbf{X_i})&\textstyle =\frac{1}{[nq] - [np] }  \sum_{k=[np]+1}^{[nq]} X_{i,(k)}, \\
    \textstyle\hat\sigma^2_B(\mathbf{X_i}) &\textstyle = \frac{1}{[nq] - [np] }  \sum_{k=[np]+1}^{[nq]} (X_{i,(k)} - \hat{\mu}_{B}(\mathbf{X_i}))^2,\\\textstyle\widehat\cov_B(\mathbf{X_i},\mathbf{X_j}) &\textstyle =  \frac{1}{[nq] - [np] }  \sum_{k=[np]+1}^{[nq]} (X_{i,(k)} - \hat{\mu}_{B}(\mathbf{X_i})) (X_{j,(k)} - \hat{\mu}_{B}(\mathbf{X_j})),
\end{align*}
 where $X_{j,(i)}$ is ranked
according to the values of the benchmarked sample $(Y_{i})_{i=1}^n$; see Appendix A in~\cite{Wozny2024GaussianDS} for details. Note that by setting $p=0$ and $q=1$ we recover standard  (unconditional) estimators of mean, variance, and covariance that are given by
\begin{align}
\textstyle\hat\mu(\mathbf{X_i})& \textstyle :=\frac{1}{n}  \sum_{k=1}^{n} X_{i,k}, \label{eq.standard.mean} \\
\textstyle\hat\sigma^2(\mathbf{X_i}) &\textstyle:= \frac{1}{n}  \sum_{k=1}^{n} (X_{i,k} -  \hat\mu(\mathbf{X_i}))^2,\label{eq:standard.variance.est}\\   
\textstyle\hat\sigma^2(\mathbf{Y}) &\textstyle:= \frac{1}{n}  \sum_{k=1}^{n} (Y_{k} -  \hat\mu(\mathbf{Y}))^2,\label{eq:standard.variance.Y.est}\\
\textstyle\widehat\cov(\mathbf{X_i}, \mathbf{X_j}) &\textstyle := \frac{1}{n }  \sum_{k=1}^{n} (X_{i,k} - \hat\mu(\mathbf{X_i})) (X_{j,k} - \hat\mu(\mathbf{X_j})).\label{eq:standard.covariance.est}
\end{align} 
Next, we can use Theorem~\ref{th:OurTh} to define alternative estimators of unconditional variance and covariance for multivariate normal distributions that are given by
\begin{align}
   \textstyle \bar\sigma^2(\mathbf{X_i})& \textstyle:=\hat\sigma^2_B(\mathbf{X_i}) - (1 - \frac{1}{s(p,q)}) \left(\frac{\textstyle\widehat\cov^2_B(\mathbf{X_i},\mathbf{Y})}{\textstyle\hat\sigma^2_B(\mathbf{Y})} \right), \label{eq:variance_estimator} \\
    \textstyle\overline{\cov}(\mathbf{X_i}, \mathbf{X_j}) &\textstyle:= \textstyle\widehat\cov_B(\mathbf{X_i},\mathbf{X_j}) - (1 - \frac{1}{s(p,q)}) \textstyle\hat\sigma^2_B(\mathbf{Y}) \left( \frac{\textstyle\widehat\cov_B(\mathbf{X_i},\mathbf{Y})}{\textstyle\hat\sigma^2_B(\mathbf{Y})}\right) \left(\frac{\textstyle\widehat\cov_B(\mathbf{X_j},\mathbf{Y})}{\textstyle\hat\sigma^2_B(\mathbf{Y})} \right). \label{eq:con_covariance_estimator} \\
   \textstyle \bar\sigma^2(\mathbf{Y})&: \textstyle= \frac{\textstyle\hat\sigma^2_B(\mathbf{Y})}{s(p,q)} . \label{eq:variance_estimator.Y}
\end{align}  
 Note that these estimators are based on Equation~\eqref{eq:cond.cov.est} and tacitly assume that a sample follows a multivariate normal distribution.
 \begin{remark}
     Estimators defined in Equations       \eqref{eq:cond_exp_emp}-\eqref{eq:cond_cov_emp_X} are consistent. Furthermore, under the assumption of normality, the estimators in Equations \eqref{eq:variance_estimator}-\eqref{eq:variance_estimator.Y} are also consistent.      The consistency of the estimators follows from classical results in estimation theory and it can be found in Proposition 5 in \cite{Wozny2024GaussianDS}.
 \end{remark}

\section{Empirical evidence}
\label{section:empirical}
The objective of this section is to explore the application of the 20/60/20 rule in financial contexts. The starting point of our analysis is that although the financial data are known to exhibit heavy-tailed behavior, we have observed that observations in the middle 60\% of the distribution often resemble a normal distribution. This observation motivates our focus on the middle subset, as it provides a region where established statistical tools for normally distributed data can be effectively applied. The 20/60/20 rule emphasizes treating each subset individually, enabling tailored approaches for distinct characteristics of the data. Using the properties of the middle 60\%, we can simplify certain analyses while still accounting for the heavier tails separately, thereby combining robustness with computational efficiency. Please note that all data in this section is downloaded from Yahoo Finance via \textbf{R} tidyquant package.

\subsection{Spatial clustering based on 20/60/20 rule in financial data} \label{section:one_dim}

In this section, we want to answer the question if spatial clustering based on the 20/60/20 rule provides any insight in the context of financial data.  As numerous studies suggest,  the distribution of logarithmic returns of market data tends to exhibit heavy-tailed bahaviour over various long time horizons, see, e.g., \cite{Lo1997}, \cite{Count2002}, \cite{Cheoljun2019}, and \cite{BIELAK2021102308} for details. This phenomenon is often attributed to factors such as global political events, human behavior, and significant correlations within the stock market. However, we now investigate whether the central portion of the observations can be adequately modeled by a normal distribution. This stylized fact is linked to the unconditional setup, and we aim to assess its validity in the spatial-based framework. To do so, we examine the Q–Q (quantile-quantile) plots. 

First, let us analyse the empirical distributions of returns of several major stock market indices. Figure~\ref{plot:QQ_index} presents empirical Q-Q  plots with normal quantiles for the daily returns of the S\&P 500 (SPX), Dow Jones Industrial Average (DJI), Nasdaq Composite (IXIC), FTSE 100 (FTSE), DAX (GDAXI), and Nikkei 225 (N225) over the 2014--2024 period. Given that these indices aggregate a large number of stocks, some form of normality in the central portion of the distribution is expected due to the Central Limit Theorem (CLT). The Q-Q plots confirm this, showing that observations in the middle part of the sample closely align with normality. Furthermore, the quantiles at levels $p=0.198$ and $q=1-p$ frequently coincide with the transition point where the empirical distribution shifts from normal to heavy-tailed behavior. This suggests that these quantiles may serve as markers for regime changes between different distributional properties.

\begin{figure}[htp!]
\begin{center}
\includegraphics[width=0.32\textwidth]{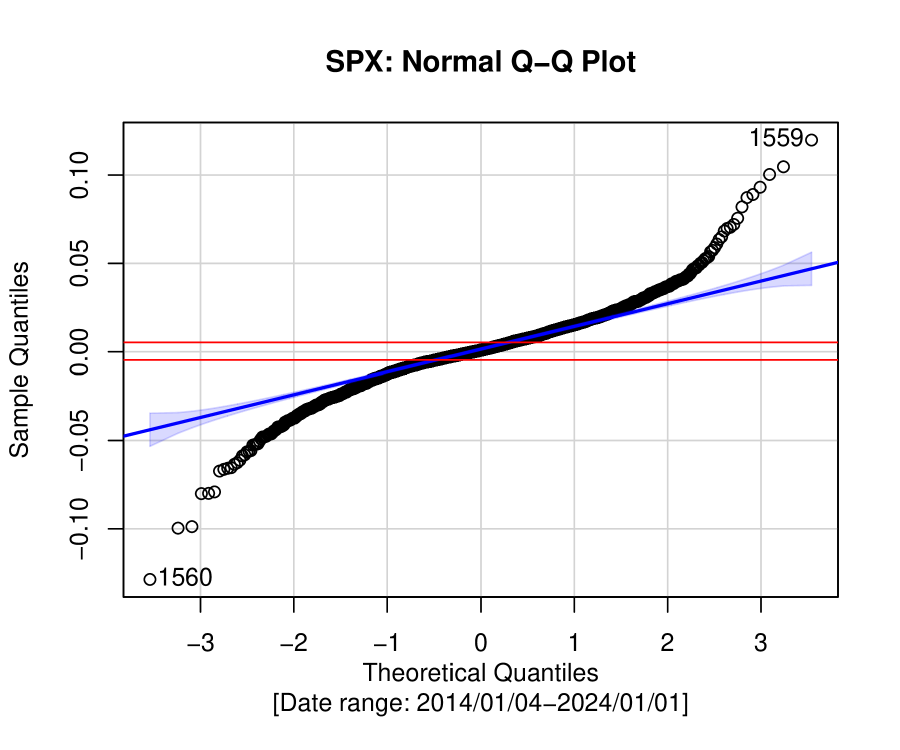}
\includegraphics[width=0.32\textwidth]{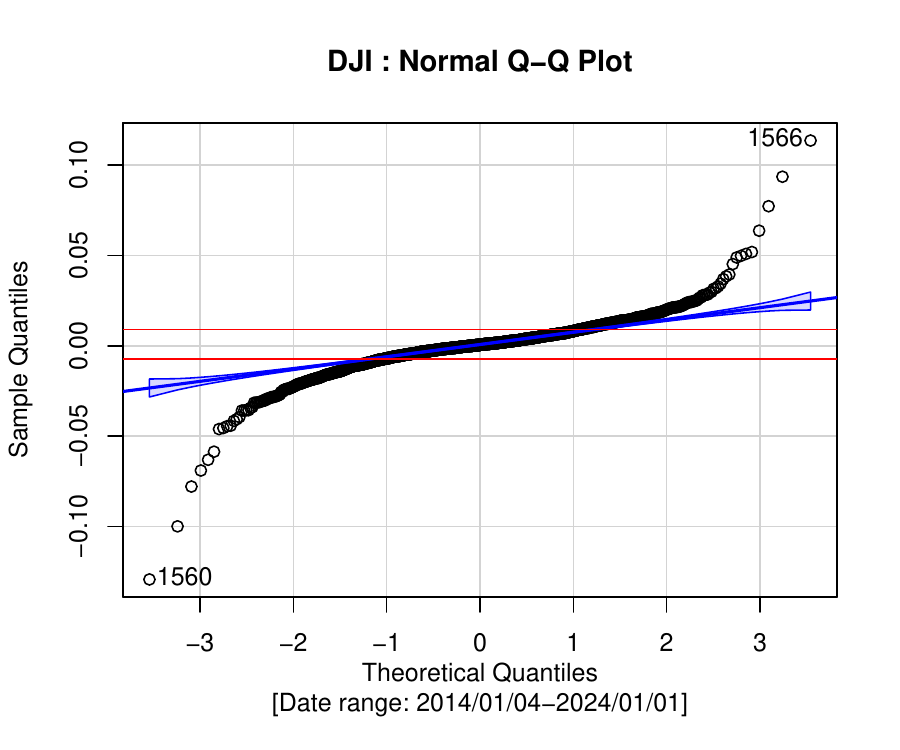}
\includegraphics[width=0.32\textwidth]{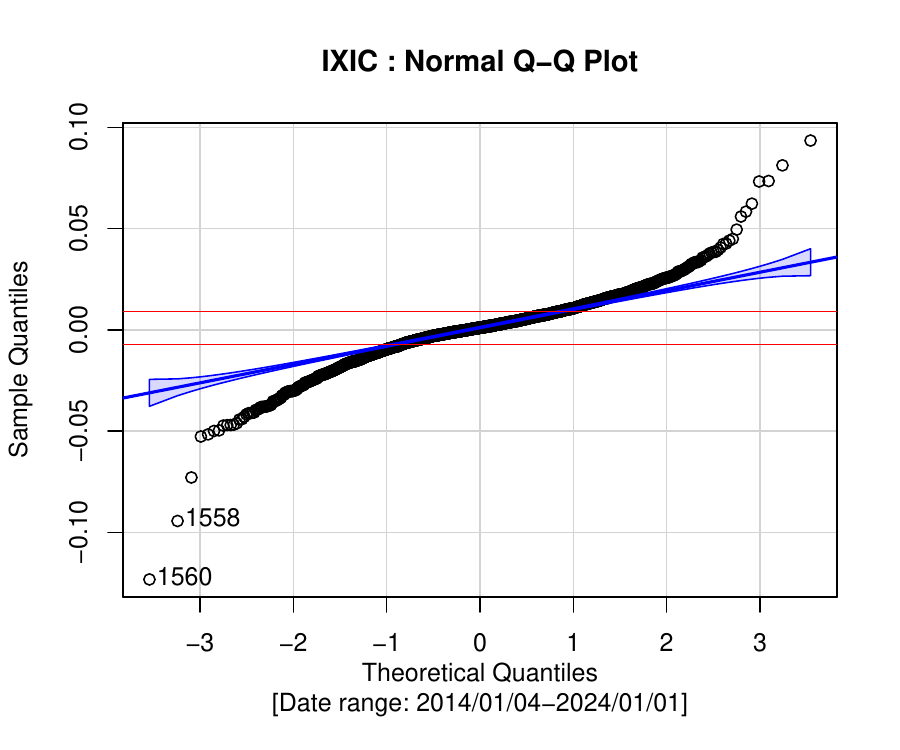}
\includegraphics[width=0.32\textwidth]{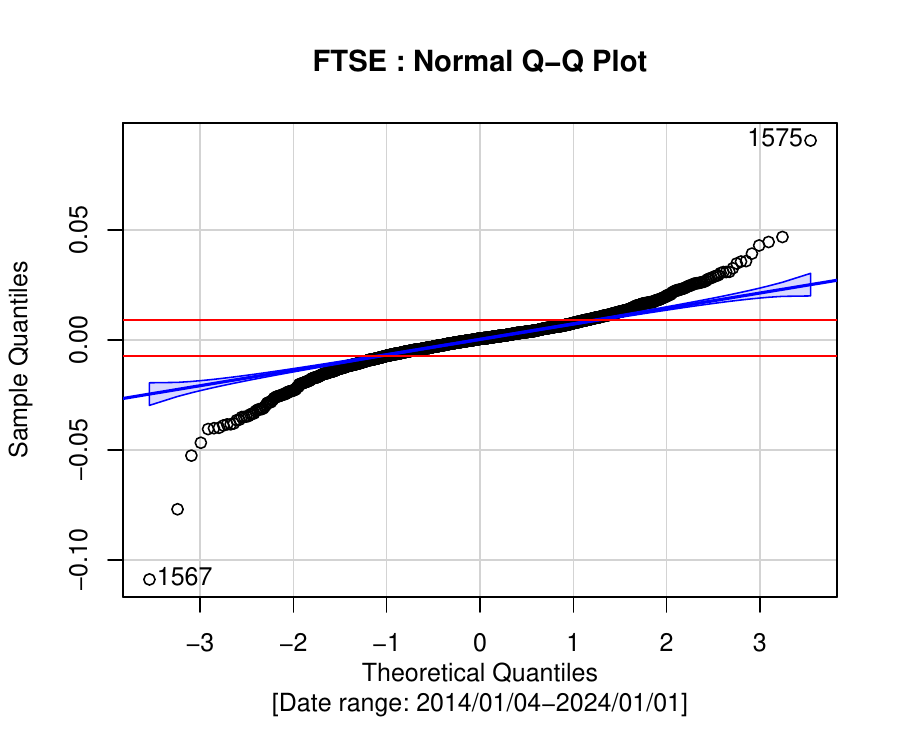}
\includegraphics[width=0.32\textwidth]{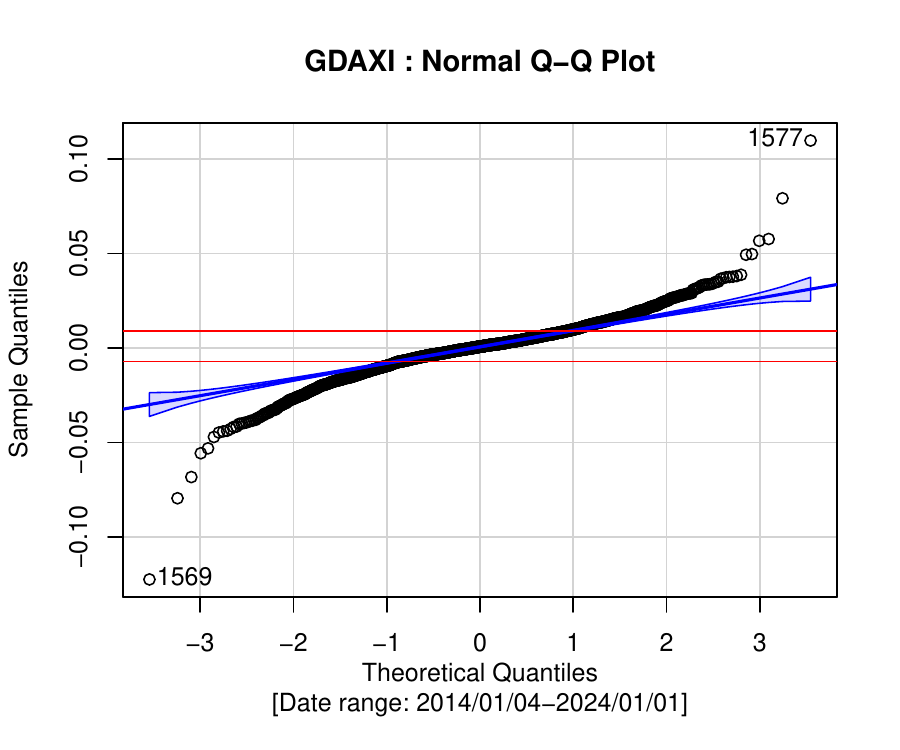}
\includegraphics[width=0.32\textwidth]{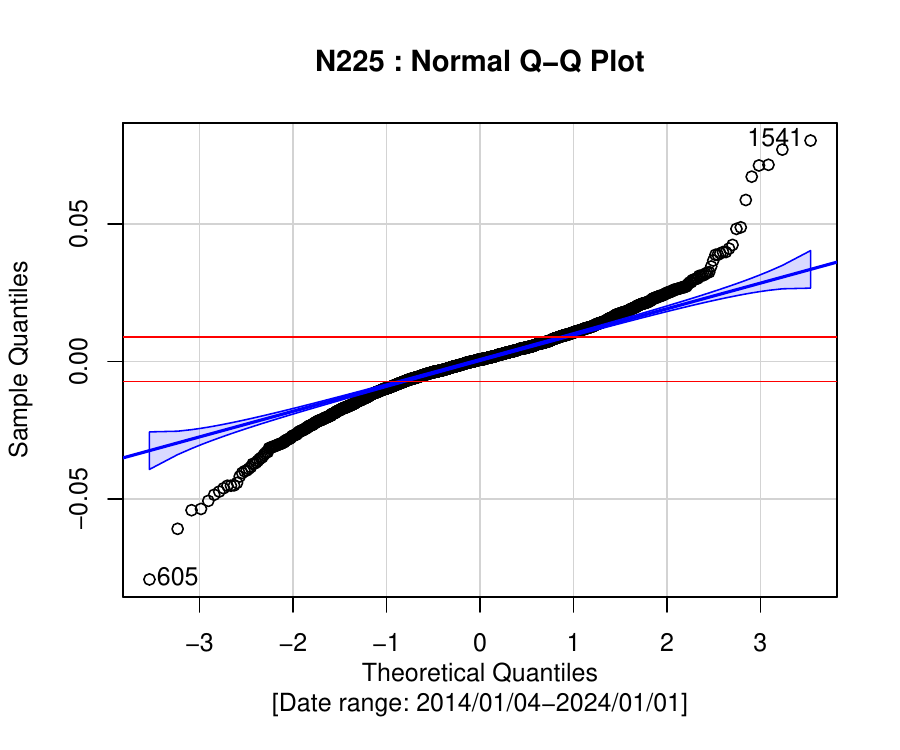}
\end{center}
\caption{The figures present Q–Q plots with normal theoretical quantiles of daily returns for S\&P 500 (SPX), Dow Jones Industrial Average (DJI), Nasdaq Composite (IZIC), FTSE 100 (FTSE), DAX (GDAXI) and Nikkei 225 (N225)  during the 2014--2024 period. The confidence intervals for the normal distribution are highlighted in blue, and the red lines represent the quantile levels $p=0.198$ and $q=1-p$. It can be observed that the area defined by these lines aligns with the normal distribution.}
\label{plot:QQ_index}
\end{figure}

Second, we decided to check if this phenomenon also occurs for the weekly and monthly rates of returns, as illustrated in Figure~\ref{plot:QQ_index_SPX}. However, due to the aggregation effects, the phenomenon is noticeably weaker, but still can be seen; we refer to Stylized fact nr 4. in \cite{Count2002} for details.

\begin{figure}[htp!]
\begin{center}
\includegraphics[width=0.32\textwidth]{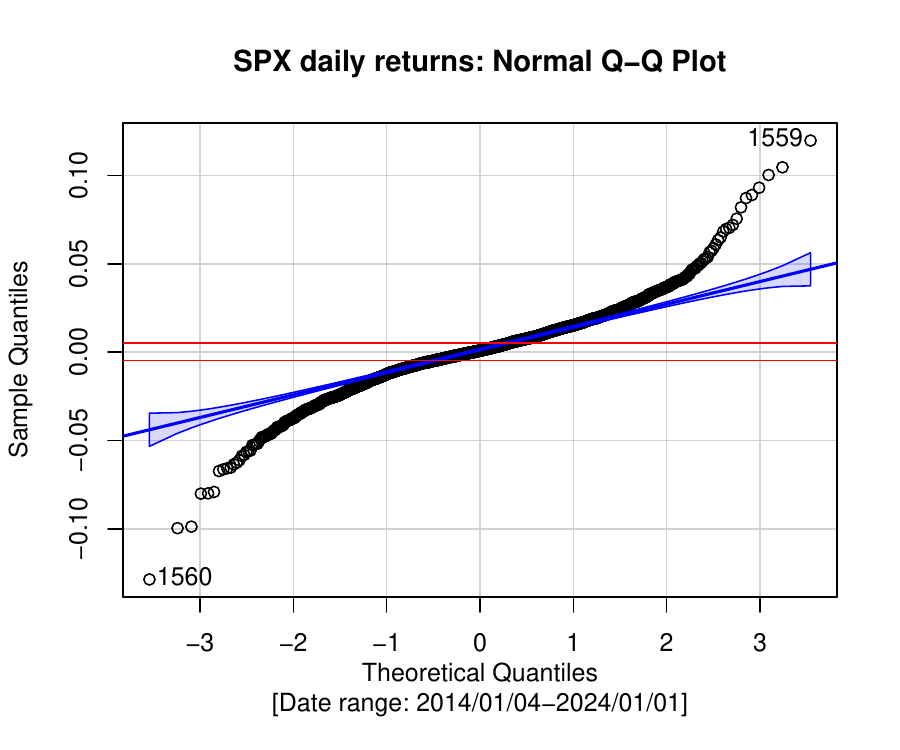}
\includegraphics[width=0.32\textwidth]{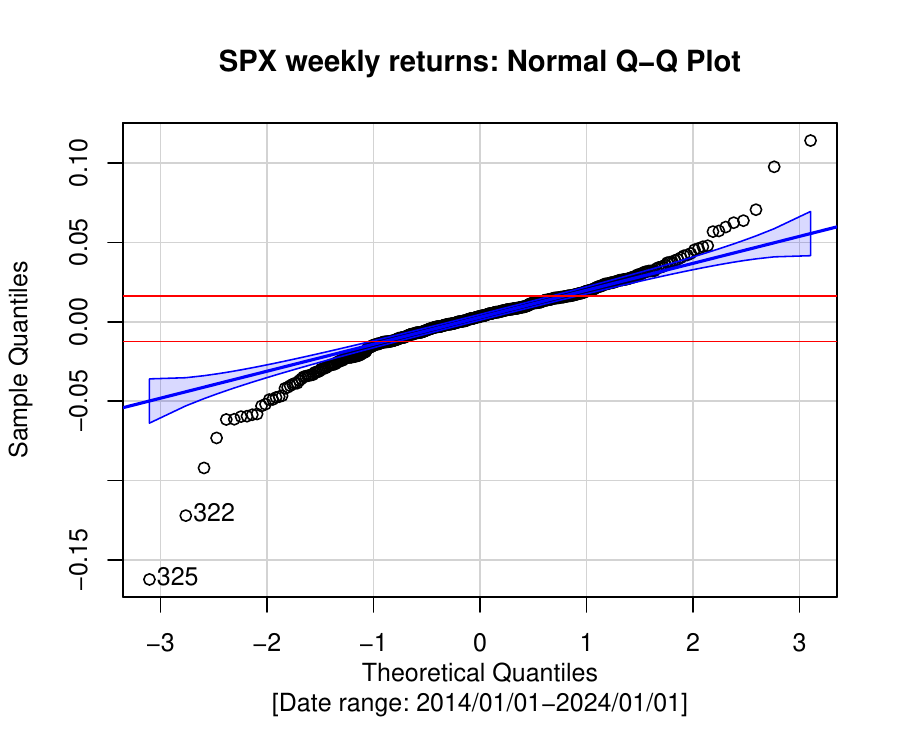}
\includegraphics[width=0.32\textwidth]{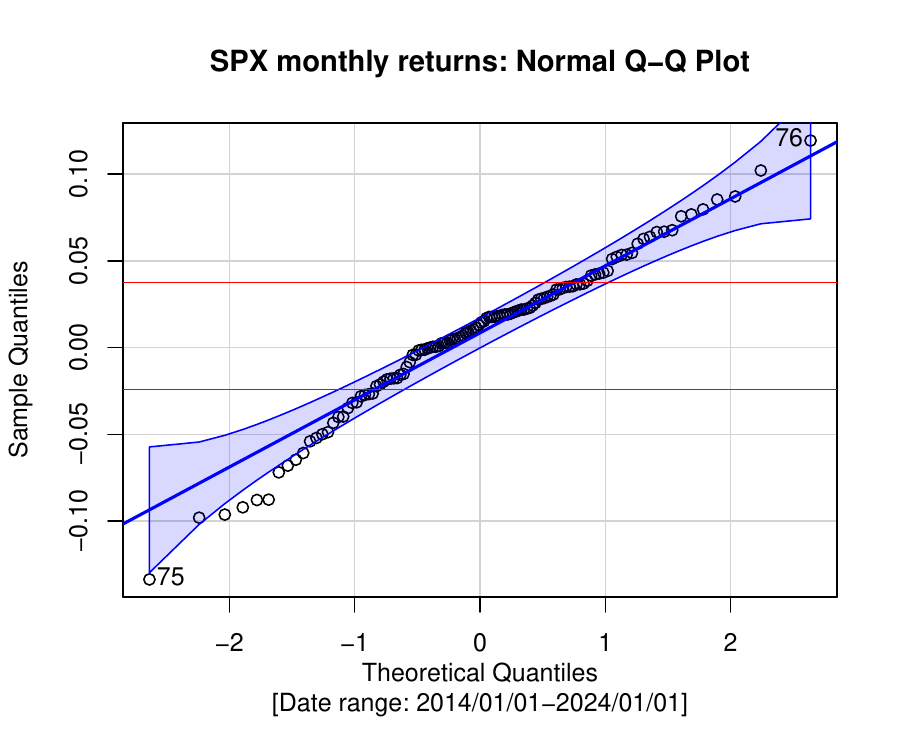}
\end{center}
\caption{The figures present Q–Q plots with normal theoretical quantiles of daily, weekly, and monthly returns for S\&P 500 (SPX)  during the 2014--2024 period. The confidence intervals for the normal distribution are highlighted in blue, and the red lines represent the quantile levels $p=0.198$ and $q=1-p$. It can be observed that the area defined by these lines aligns with the normal distribution.}
\label{plot:QQ_index_SPX}
\end{figure}
Finally, we verified if this phenomena is also visible on individual stock level. Figure~\ref{plot:QQ_stocks} shows examples of the Q–Q plot of the Apple (AAPL) and Amazon (AMZN) stocks in a $10$, $5$ and $1$ year period time daily and weekly returns. We can observe that, similarly to indices, the return rates of individual companies also exhibit a distribution close to normal within the 0.2 and 0.8 quantiles. As we can see, the points where there is an alleged switch between normal and heavy-tailed behavior are close to the 20/60/20 split and data tend to fit well to normal in the central region but diverge from normal in the tails.

\begin{figure}[htp!]
\begin{center}
\includegraphics[width=0.32\textwidth]{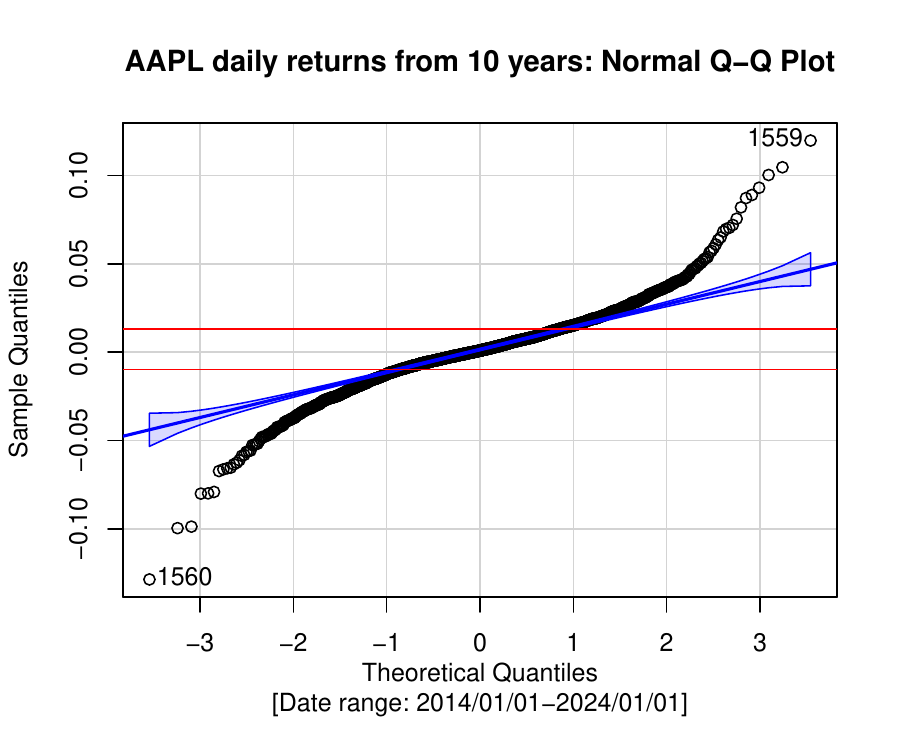}
\includegraphics[width=0.32\textwidth]{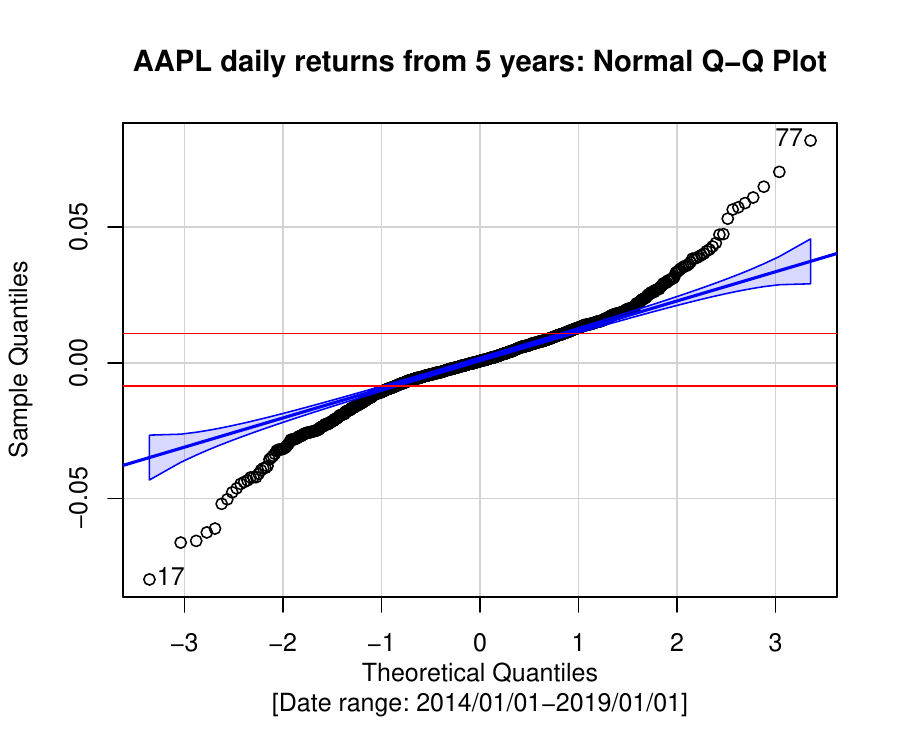}
\includegraphics[width=0.32\textwidth]{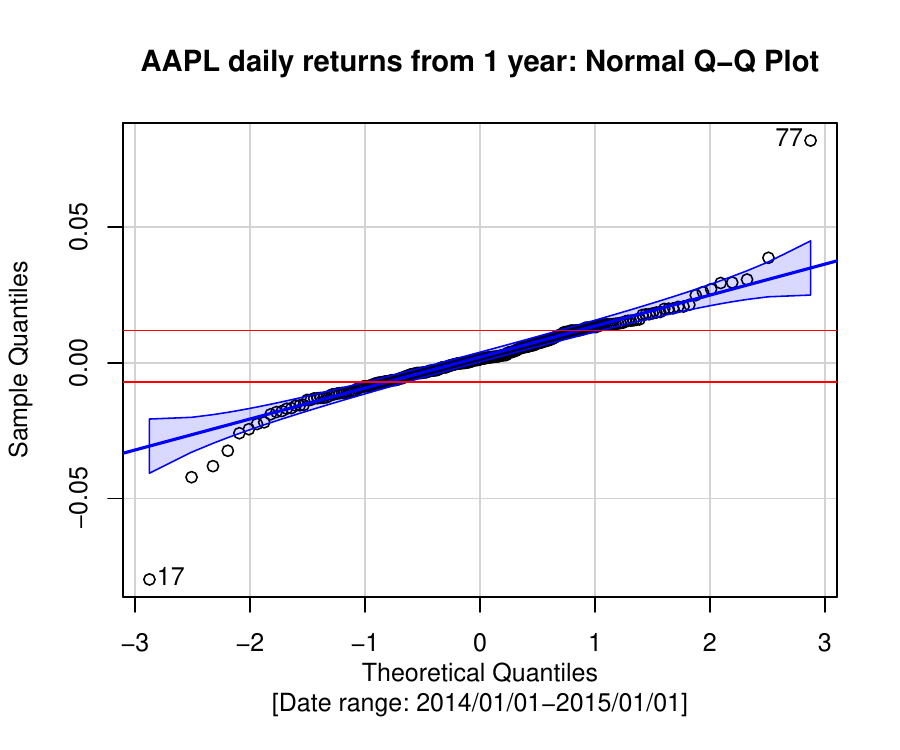}
\includegraphics[width=0.32\textwidth]{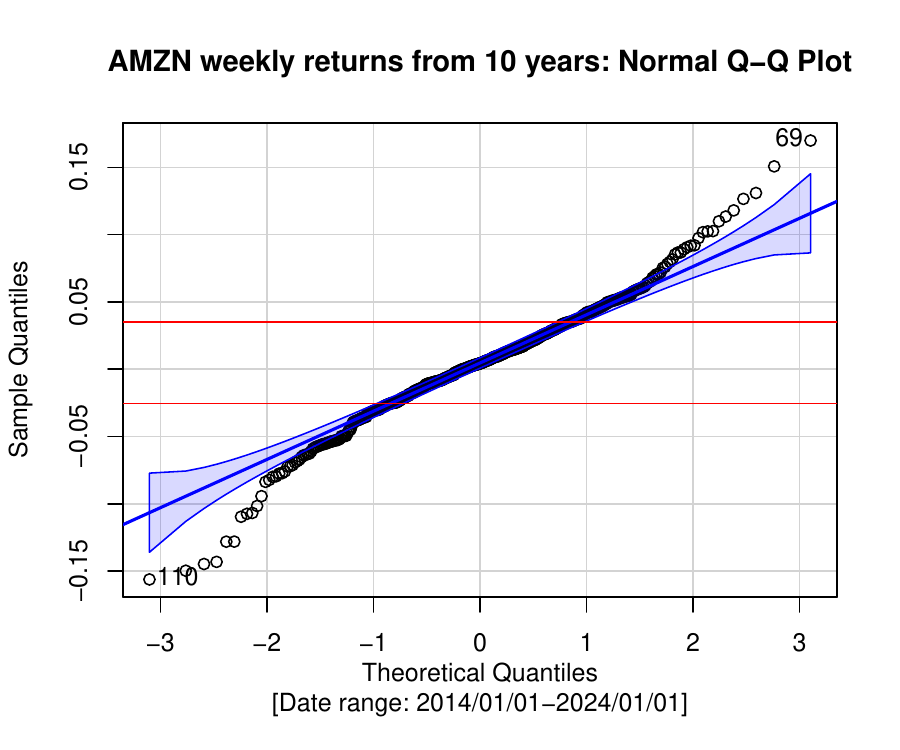}
\includegraphics[width=0.32\textwidth]{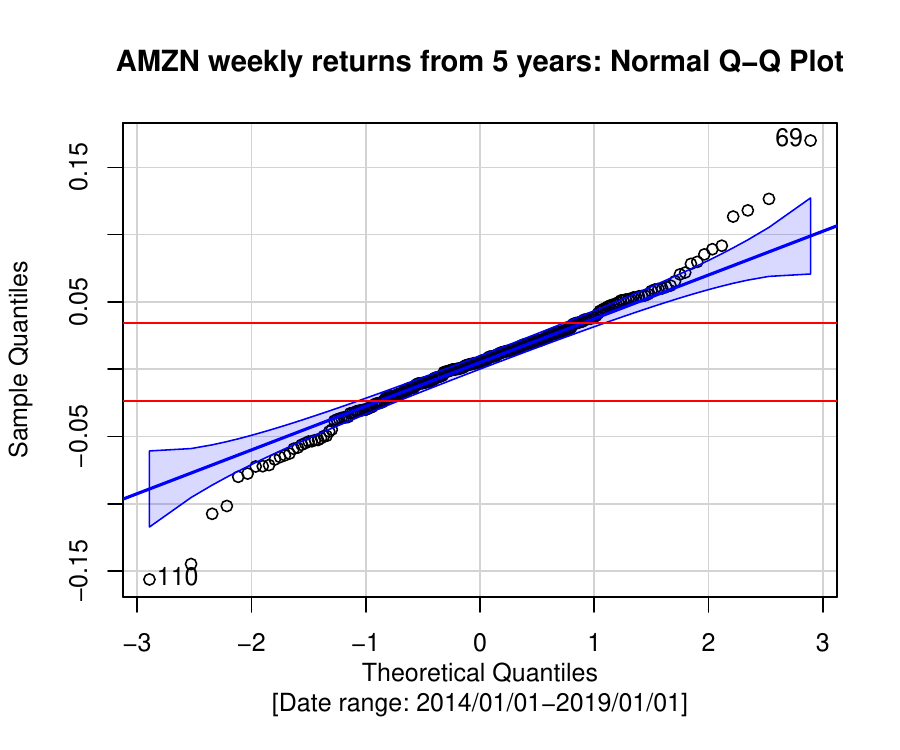}
\includegraphics[width=0.32\textwidth]{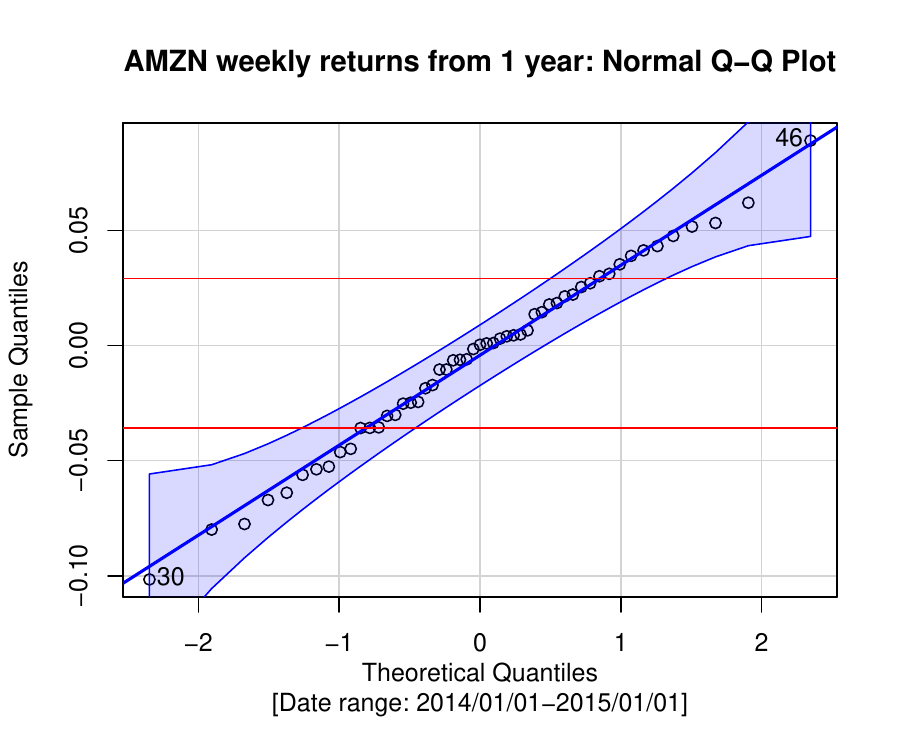}
\end{center}
\caption{The figures present Q–Q plots with normal theoretical quantiles of returns (daily and weekly) for Apple and Amazon during the 2014--2024, 2014--2019 and 2014--2015 periods. The confidence intervals for the normal distribution are highlighted in blue, and the red lines represent the quantile levels $p=0.198$ and $q=1-p$. It can be observed that the area defined by these lines aligns with the normal distribution.}
\label{plot:QQ_stocks}
\end{figure}
The 20/60/20 rule highlights a shift in spatial regimes within the distribution, indicating differing behaviors across conditional subsets that align with the rule's framework. While the central region can often be approximated by a normal distribution, the tail regions exhibit heavy-tailed dynamics, making them inherently difficult to manage. This observation is consistent with numerous financial stylized facts, including heavy-tailed distributions and the occurrence of rare, high-impact events, such as those described by the Black Swan phenomenon; see \cite{Lo1997, Cheoljun2019, BlackSwan}. To ensure the robustness of these findings, we have also examined additional stocks, indices, and various time horizons. The results consistently confirm that the conditional structure imposed by the 20/60/20 rule is a universal pattern, visible across a wide range of datasets. This suggests that the observed regime shifts are an inherent feature of financial return distributions rather than an artifact of a specific sample or market condition.


\subsection{Characterization of heavy-tailed behavior based on 20/60/20 rule}

In this section, our goal is to project the variance of the central region onto the unconditional variance to assess deviations from a balanced state. This approach allows us to define a distribution characterization metric that quantifies the heaviness of the tails relative to those induced by a normal distribution. By doing so, we provide a numerical measure of the effects discussed in Section~\ref{section:one_dim}, offering an alternative tail metric that does not rely on asymptotic tail behavior, as it is usually done. This enables a more flexible and data-driven assessment of distributional properties across different regimes. To do this, we introduce the statistic $S_n$ given, for any interval $(p,q)$, by
\begin{equation}
\label{s_stat}
    S_n := \sqrt{n} \left(\frac{\hat{\sigma}^2(\mathbf{Y})}{\bar{\sigma}^2(\mathbf{Y}) } -1 \right),
\end{equation}
where $\hat{\sigma}^2(\mathbf{Y})$ and  $\bar{\sigma}^2(\mathbf{Y}) $ are given by Equations \eqref{eq:standard.variance.Y.est} and \eqref{eq:variance_estimator.Y}.
In a nutshell, statistic $S_n$ measures the heaviness of the distribution tails when compared with the normal distribution. As we show now, this statistic is asymptotically normal as the sample size grows to infinity.

\begin{proposition}\label{pr:benchmark}

 Let $X\sim \mathcal{N}_d(\mu,\Sigma)$, and let $\mathbf{X}$ be a sample from $X$. Let us fix $p,q \in (0,1)$ such as $p < q$ and let $\mathbf{Y}$ and $S_n$ be given by \eqref{eq:sample_Y} and \eqref{s_stat}, respectively. Then, $S_n\overset{d}{\to }\mathcal{N}(0,\tau^2)$, as $n \to \infty$, where $\tau >0$ is a constant independent of $\mu$ and $\Sigma$.

\end{proposition}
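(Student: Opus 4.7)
The plan is to reduce to the case $Y\sim\mathcal N(0,1)$ by an invariance argument, establish a joint asymptotically linear representation of the two variance estimators in the ratio, and conclude with the delta method.

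First, I would exploit the affine equivariance of both estimators. If $Y_k$ is replaced by $cY_k+d$ for $c>0$ and $d\in\bR$, the interval $\hat B$ in \eqref{eq:set_B.hat} shifts by $d$ and scales by $c$, so $\hat\sigma^2(\mathbf Y)$ and $\hat\sigma^2_B(\mathbf Y)$ both scale by $c^2$; hence $\bar\sigma^2(\mathbf Y)$ scales by $c^2$ as well and the ratio in \eqref{s_stat} is invariant. Since $Y=\langle a,X\rangle\sim\mathcal N(\mu_Y,\sigma_Y^2)$, we may without loss of generality assume $Y\sim\mathcal N(0,1)$. This already forces the limiting variance $\tau^2$ to depend only on $(p,q)$, and in particular to be independent of $\mu$ and $\Sigma$.

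Next, set $V_n:=\hat\sigma^2(\mathbf Y)$ and $W_n:=\bar\sigma^2(\mathbf Y)=\hat\sigma^2_B(\mathbf Y)/s(p,q)$. By the consistency in the Remark, both converge in probability to $\sigma_Y^2=1$. The core step is a joint asymptotically linear representation
\begin{align*}
V_n-1 &= \tfrac{1}{n}\sum_{k=1}^n (Y_k^2-1)+o_p(n^{-1/2}),\\
W_n-1 &= \tfrac{1}{n}\sum_{k=1}^n \psi(Y_k)+o_p(n^{-1/2}),
\end{align*}
where $\psi\colon\bR\to\bR$ is a bounded deterministic influence function depending only on $p,q$. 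The expansion for $V_n$ is standard. For $W_n$, the data-dependent endpoints $Y_{([np]+1)}$ and $Y_{([nq])}$ of $\hat B$ have to be handled: I would use the Bahadur representation of sample quantiles to replace them by the population quantiles $\Phi^{-1}(p),\Phi^{-1}(q)$ up to an $o_p(n^{-1/2})$ remainder, and then linearize the conditional second moment around the deterministic interval $B$. An equivalent route is to view $W_n$ as a Hadamard-differentiable functional of the empirical CDF and apply the functional delta method; in either case the resulting $\psi$ is a sum of a truncated centered-square term $\1_{\{y\in B\}}(y-\mu_B)^2/(q-p)$ and boundary corrections originating from the Bahadur linearization.

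Finally, writing $S_n=\sqrt n\, g(V_n,W_n)$ with $g(u,v):=u/v-1$ and noting that $\nabla g(1,1)=(1,-1)$, the multivariate delta method combined with the joint CLT of the previous step yields
\begin{equation*}
S_n \overset{d}{\to} \mathcal N\!\left(0,\,\Var\!\left((Y^2-1)-\psi(Y)\right)\right),
\end{equation*}
with $Y\sim\mathcal N(0,1)$. Setting $\tau^2:=\Var((Y^2-1)-\psi(Y))$ gives the claim; strict positivity of $\tau^2$ holds because $Y^2-1$ captures global dispersion while $\psi(Y)$ is driven essentially by the central truncation, so they cannot agree almost surely. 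The main obstacle is the Bahadur-type linearization of $W_n$, where both the centered second moment and the random truncation region must be handled simultaneously with an $o_p(n^{-1/2})$ remainder; I would address this either by invoking standard theory for trimmed $L$-estimators or by strengthening the argument behind Proposition~5 of \cite{Wozny2024GaussianDS} from a law of large numbers to a central limit statement.
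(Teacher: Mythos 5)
Your proposal is correct and follows essentially the same route as the paper: an asymptotically linear (influence-function) representation of both variance estimators --- with the random truncation endpoints handled by a Bahadur-type linearization, which is exactly what the cited Lemmas~3 and~4 of \cite{JelPit2021} provide --- followed by the joint CLT and the delta method applied to the ratio. Your opening affine-equivariance reduction to $Y\sim\mathcal{N}(0,1)$ is a nice addition that the paper leaves implicit, since it cleanly explains why $\tau$ depends only on $(p,q)$ and not on $\mu$ or $\Sigma$.
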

\begin{proof}
The proof involves utilizing the multivariate central limit theorem and the delta method. Let
\begin{align}\label{eq.CLT}
P_n:=\sqrt{n}
\begin{bmatrix}
\hat{\sigma}^2(\mathbf{Y}) - \Var(Y) \\
\hat{\sigma}^2_B(\mathbf{Y}) - \Var(Y|Y\in B) \\
\end{bmatrix}.
\end{align}
We demonstrate that $P_n$ converges in distribution to a certain multivariate normal distribution as $n \to \infty$. We begin by presenting alternative expressions for the components of $P_n$.

First, setting $A_k := (Y_k - \mathbb{E}(Y))^2 - \Var(Y) $ and $p_n= -\sqrt n (\hat \mu(\mathbf Y) - \mathbb{E}(Y))^2$, we obtain
\begin{align}
        \sqrt{n}(\hat \sigma^2(\mathbf{Y}) - \Var(Y) )&= \sqrt{n}\frac{1}{n}\sum_{k=1}^n(Y_k-\mathbb{E}(Y))^2 - \sqrt{n}\Var(Y) -\sqrt{n}(\hat \mu(\mathbf Y) - \mathbb{E}(Y))^2.\nonumber\\
        & = \sqrt{n}\frac{1}{n}\sum_{k=1}^nA_k - \sqrt n (\hat \mu(\mathbf Y) - \mathbb{E}(Y))^2\nonumber \\
&=\sqrt n\frac{1}{n}\sum_{k=1}^nA_k + p_n \label{eq:Ak}.
        \end{align}
Next, note that $\sqrt{n}\left( \frac{1}{n} \sum_{k=1}^n A_k \right) \overset{d}{\to} \mathcal{N}(0,\eta^2)$, as $n\to \infty$ for some $\eta^2 > 0$ as a average of i.i.d random variables and by the central limiti theorem we have $p_n = o_p(1)$.

Now, combining Lemma 3 and Lemma 4 from \cite{JelPit2021} we get
\begin{align}
\sqrt{n}(\hat{\sigma}^2_B(\mathbf{Y}) - \Var(Y|Y\in B)) = \sqrt{n}\frac{n}{[nq] - [np]} \frac{1}{n}\sum_{k =1}^{n} B_k + t_n, \label{eq:JelPitB_i}
\end{align}
where $t_n \overset{\mathbb{P}}{\to} 0$ as $n \to \infty$ and, for any $k\in \mathbb{N}\setminus \{0\}$, we have
\begin{align*}
B_k &:= \left(\left(Y_k - \wo(Y|Y\in B)\right)^2-\Var(Y | Y\in B)\right)\mathbbm{1}_{\{\Phi^{-1}(p) < Y_k < \Phi^{-1}(q)\}} \\
&\phantom{=}+ \left(\mathbbm{1}_{\{Y_k \leq \Phi^{-1}(p)\}} - p\right)\left( \left(\Phi^{-1}(p) - \wo(Y|Y\in B)\right) - \Var(Y|Y\in B)\right) \\
&\phantom{=}+ \left(q-\mathbbm{1}_{\{Y_k \leq \Phi^{-1}(q) \}}\right)\left(\left(\Phi^{-1}(q)-\wo(Y|Y\in B)\right)^2 - \Var(Y|Y\in B)\right).
\end{align*}
Now, combining \eqref{eq:Ak} and \eqref{eq:JelPitB_i} we obtain
\begin{align}
P_n &=D_n
\sqrt{n} \frac{1}{n} \sum_{k = 1}^n
C_k
+  g_n, \quad n\in \mathbb{N},
\end{align}
where, for any $n\in \mathbb{N}$, we set $C_n:=[A_n, B_n]^T$, $g_n:=[p_n,t_n]^T$, and $D_n$  is $2\times 2$ diagonal matrix with the main diagonal given by $[1,\frac{n}{[nq] - [np]}]$. In particular, we get that $(C_n)$ is a family of i.i.d. square-integrable random vectors with zero expectation and $g_n\overset{\mathbb{P}}{\to} 0$ as $n\to\infty$. Thus, noting that $D_n$ converges to the diagonal matrix $ D$ with the main diagonal $[1,\frac{1}{q-p}]$ and using the multivariate central limit theorem combined with Slutsky's theorem, we get that
$P_n \overset{d}{\to}\mathcal{N}(\begin{bmatrix} 0, 0\end{bmatrix}^T, \Sigma)$, as $n\to\infty$, where 
\begin{align*}
\Sigma := D
\begin{bmatrix}
D^2(A_1) & \cov(A_1, B_1) \\
\cov(B_1, A_1) & D^2(B_1) \\
\end{bmatrix} D.
\end{align*}
Next, let us define $H_n:=[\hat\sigma^2(\mathbf{Y}),\hat{\sigma}^2_B(\mathbf{Y})]^T$, $H:=[\Var(Y),\Var(Y|Y\in B)]^T$, and note that $P_n=\sqrt{n}(H_n-H)\overset{d}{\to}\mathcal{N}([0,0]^T,\Sigma)$. Now, using Slusky's theorem for $g(x,y) := s(p,q) \frac{x}{y}$ and noting that
\begin{equation}
    \frac{s(p,q)\Var(Y)}{\Var(Y|Y \in B)} = \frac{\Var(Y|Y \in B)}{\Var(Y|Y \in B)} = 1,
\end{equation}
we obtain
\begin{align*}
    \sqrt{n} \left( g(H_n) - g(H) \right) =
    \sqrt{n} \left(\frac{s(p,q)\hat\sigma^2(\mathbf{Y})}{\hat{\sigma}^2_B(\mathbf{Y})} -\frac{s(p,q)\Var(Y)}{\Var(Y|Y\in B)} \right)=S_n\overset{d}{\to }\mathcal{N}(0,\tau^2), \quad n\to\infty,
\end{align*}
which concludes the proof.
\end{proof}

 In particular, noting that $S_n=\sqrt{n} \left(\frac{s(p,q)\hat{\sigma}^2(\mathbf{Y})}{\hat{\sigma}_B^2(\mathbf{Y}) } -1 \right)$, we get that $\left(\frac{s(p,q)\hat{\sigma}^2(\mathbf{Y})}{\hat{\sigma}_B^2(\mathbf{Y}) } -1 \right) \overset{\mathbb{P}}{\to } 0$ if the underlying sample is multivariate normal; this follows directly from Proposition~\ref{pr:benchmark}.

As previously discussed, the statistic $S_n$ serves as a general measure of tail heaviness that does not rely on tail asymptotics. Our primary objective here is to evaluate the effectiveness of $S_n$ as a tail measurement tool by applying it to financial datasets. Specifically, we analyze its values across various companies and indices for daily and weekly returns. Table~\ref{table:Stat} provides values of $S_n$ in comparison to the mean and the standard error of a normal sample, and the t-Student distributed samples with various choices of degrees of freedom computed for the corresponding sample lengths using $10000$ Monte Carlo simulations. By examining these results, we assess whether $S_n$ consistently captures the presence of heavy tails in financial returns, reinforcing its validity as an alternative tail metric that does not depend on extreme value theory or asymptotic behavior.

The analysis reveals that daily returns exhibit the most significant deviations from normality, closely resembling the characteristics of a t-Student distribution with low degrees of freedom, whereas the weekly (aggregated) data shows lower values, likely due to the influence of the central limit theorem. These findings align with the patterns observed in the Q-Q plots discussed in Section~\ref{section:one_dim}. Notably, while extended timeframes generally correspond to higher values across most cases, the 2019--2021 period (marked by the COVID-19 pandemic) often yields higher values compared to the broader 2019-2024 interval. This anomaly may be attributed to the heightened market volatility and disruptions caused by the pandemic crisis during those years.

\begin{table}[htp!]
\scalebox{0.75}{
\begin{tabular}{c|cccc|cccc}
\multicolumn{1}{l|}{}              & \multicolumn{4}{c|}{Daily returns}                                                                                                                                                                                                                                              & \multicolumn{4}{c}{Weekly returns}                                                                                                                                                                                                                                              \\
                                   & \begin{tabular}[c]{@{}c@{}}2014/01/01\\ 2024/01/01\end{tabular} & \begin{tabular}[c]{@{}c@{}}2014/01/01\\ 2019/01/01\end{tabular} & \begin{tabular}[c]{@{}c@{}}2019/01/01\\ 2024/01/01\end{tabular} & \begin{tabular}[c]{@{}c@{}}2019/01/01\\ 2021/01/01\\ (covid)\end{tabular} & \begin{tabular}[c]{@{}c@{}}2014/01/01\\ 2024/01/01\end{tabular} & \begin{tabular}[c]{@{}c@{}}2014/01/01\\ 2019/01/01\end{tabular} & \begin{tabular}[c]{@{}c@{}}2019/01/01\\ 2024/01/01\end{tabular} & \begin{tabular}[c]{@{}c@{}}2019/01/01\\ 2021/01/01\\ (covid)\end{tabular} \\ \hline
JNJ                                & 49.79                                                           & 29.94                                                           & 36.77                                                           & 40.10                                                                     & 8.15                                                            & 6.23                                                            & 3.44                                                            & 7.55                                                                      \\
PG                                 & 52.10                                                           & 27.39                                                           & 36.66                                                           & 42.28                                                                     & 12.18                                                           & 5.43                                                            & 10.68                                                           & 12.96                                                                     \\
V                                  & 52.05                                                           & 30.17                                                           & 37.75                                                           & 35.00                                                                     & 15.53                                                           & 6.45                                                            & 12.55                                                           & 12.00                                                                     \\
KO                                 & 57.69                                                           & 21.60                                                           & 46.78                                                           & 48.07                                                                     & 19.99                                                           & 2.59                                                            & 21.78                                                           & 30.45                                                                     \\
NVDA                               & 52.78                                                           & 51.60                                                           & 19.50                                                           & 18.39                                                                     & 9.73                                                            & 10.73                                                           & 2.68                                                            & 2.83                                                                      \\
PFE                                & 48.95                                                           & 29.15                                                           & 27.22                                                           & 33.87                                                                     & 12.58                                                           & 8.11                                                            & 4.85                                                            & 11.03                                                                     \\
CSCO                               & 57.75                                                           & 34.62                                                           & 41.75                                                           & 39.44                                                                     & 14.62                                                           & 10.69                                                           & 9.20                                                            & 5.31                                                                      \\
AMZN                               & 51.13                                                           & 44.40                                                           & 25.76                                                           & 20.68                                                                     & 11.00                                                           & 9.08                                                            & 6.73                                                            & 3.40                                                                      \\
AAPL                               & 46.97                                                           & 31.51                                                           & 27.55                                                           & 26.34                                                                     & 10.71                                                           & 8.48                                                            & 6.12                                                            & 9.24                                                                      \\ \hline
SPX                                & 80.55                                                           & 49.61                                                           & 45.75                                                           & 61.04                                                                     & 20.98                                                           & 12.05                                                           & 12.48                                                           & 19.12                                                                     \\
DJI                                & 87.57                                                           & 44.78                                                           & 64.07                                                           & 78.59                                                                     & 28.38                                                           & 13.46                                                           & 20.60                                                           & 27.88                                                                     \\
IXIC                               & 59.45                                                           & 35.36                                                           & 31.17                                                           & 40.33                                                                     & 9.78                                                            & 7.62                                                            & 3.18                                                            & 9.63                                                                      \\
FTSE                               & 54.07                                                           & 23.92                                                           & 49.99                                                           & 40.16                                                                     & 16.13                                                           & 5.72                                                            & 14.44                                                           & 17.81                                                                     \\
GDAXI                              & 54.07                                                           & 27.28                                                           & 48.95                                                           & 51.73                                                                     & 16.94                                                           & 3.75                                                            & 20.42                                                           & 16.57                                                                     \\
N225                               & 39.81                                                           & 39.46                                                           & 19.00                                                           & 22.97                                                                     & 11.56                                                           & 4.76                                                            & 13.73                                                           & 21.65                                                                     \\ \hline
normal                              & -0.03 (1.74)                                                            & 0.08 (1.74)                                                            & 0.08 (1.76)                                                            & -0.04 (1.76)                                                                      & -0.02 (1.75)                                                            & -0.24 (1.77)                                                            & 0.20 (1.77)                                                            & 0.09 (1.82)                                                                      \\
\multicolumn{1}{l|}{t-Student(3)}  & 66.26 (31.2)                                                           & 46.81 (40.04)                                                           & 47.97 (36.37)                                                           & 29.97 (35.2)                                                                      & 31.25 (42.39)                                                           & 21.43 (22.71)                                                           & 22.88 (23.82)                                                           & 13.23 (21.04)                                                                     \\
\multicolumn{1}{l|}{t-Student(5)}  & 22.04 (4.46)                                                            & 15.74 (4.20)                                                             & 15.67 (4.28)                                                            & 9.73 (4.30)                                                                       & 9.81 (4.06)                                                            & 7.24 (4.36)                                                            & 7.29 (4.21)                                                            & 4.49 (4.10)                                                                       \\
\multicolumn{1}{l|}{t-Student(10)} & 8.15 (2.32)                                                            & 5.86 (2.37)                                                            & 5.91 (2.36)                                                            & 3.59 (2.36)                                                                      & 3.60 (2.34)                                                            & 2.87 (2.44)                                                            & 2.78 (2.41)                                                            & 1.72 (2.45)                                                                      \\
\multicolumn{1}{l|}{t-Student(30)} & 2.31 (1.90)                                                             & 1.72 (1.89)                                                           & 1.70 (1.88)                                                            & 0.99 (2.07)                                                                      & 0.99 (1.89)                                                            & 0.95 (1.94)                                                            & 0.92 (1.93)                                                            & 0.57 (1.98)                                                                     
\end{tabular}
}
\caption{The table presents $S_n$ statistic values for a daily and weekly rate of returns for various stocks and the indices covering different periods from 01-01-2014 to 12-29-2023 and the empirical mean and standard deviation (in brackets) of $S_n$ for normal distribution and t-Student distribution with various degrees of freedom calculated by the $10000$ 
Monte Carlo simulation with corresponding lengths. The results for the normal sample are expected to be close to zero, consistent with theoretical predictions, while the results for stocks and indices suggest non-normality in the data. Notably, the daily stock data exhibits results similar to t-Student distribution with low degrees of freedom, whereas the weekly (aggregated) data shows lower values, likely due to the influence of the Central Limit Theorem.}
\label{table:Stat}
\end{table}
The $S_n$ statistic offers an easy-to-implement tool for characterizing extreme risks in financial datasets, contributing to more robust risk management strategies. In particular, it can be used to detect heavily-tailed distributions and identify outliers, providing a systematic approach to assessing the presence of extreme market events.

\subsection{Application of 20/60/20 rule to portfolio optimization} \label{s:Markowitz}

The traditional Markowitz portfolio optimization problem assumes that the variance-covariance matrix effectively quantifies risk; see \cite{Markowitz1952} for details. However, as shown in previous sections, while the central portion of the distribution is approximately normal, the presence of heavy tails raises concerns about the adequacy of variance-covariance as a risk measure, potentially leading to estimation errors. In this example, we test whether applying different methods to distinct subsets of the data (placing greater emphasis on the central spatial set) yields different optimization results. This allows us to assess the practical implications of the 20/60/20 framework beyond simple tail-to-center comparisons. To achieve this, we modify the traditional Markowitz approach using Theorem~\ref{th:OurTh}, adapting the optimization procedure to account for the distinct distributional properties across different regions of the return distribution.

The 20/60/20 rule suggests that focusing on the middle 60\% of a dataset can improve decision-making efficiency. In portfolio optimization, this implies that targeting middle-performing assets, rather than emphasizing only extremes, may lead to more balanced and effective outcomes. Our modified approach tests this principle, evaluating whether it provides a more stable and practical alternative to traditional mean-variance optimization.

In our simplified setup, we focus exclusively on spatial conditioning, applying controls derived from the middle 60\% of assets. Although this approach demonstrates the potential utility of the 20/60/20 rule in portfolio management, it does not account for tail behavior or market extremes, which can be critical in real-world scenarios. We chose not to include these refinements in this study to keep the focus on demonstrating the basic applicability of the 20/60/20 concept. Incorporating such enhancements would require the use of specific temporal filters, dynamic adjustments, and a more intricate framework, which, while valuable, goes beyond the scope of this initial investigation. A more sophisticated implementation would require adjusting the trading strategy to account for temporal and contextual market conditions, such as risk aversion tailored to periods of high volatility or other extreme events and this is left for future study.

\subsection*{Portoflio optimization model}

In this section, we use $X=(X_1,\ldots, X_d)$ to represent market returns of some $d$ assets with unknown distribution, and $Y$ to denote a random variable representing the benchmark (e.g. the returns of the market portfolio). Note that while, in contrast to the previous section, we do not assume a priori that $Y$ is a linear combination of the constituents of $X$, this is often effectively the case as benchmarks are calculated as a weighted average of returns of the stock market.
We consider a one-stage investment problem with a typical assumption of the Markowitz optimization problem (see Chapter~6 in ~\cite{elton2014modern} for details), where the objective is to determine the optimal portfolio weight vector $w=(w_1,\dots,w_d) \in S$, where the set of admissible allocations
\[
S:= \left\{ (w_1,\dots,w_d) \in \mathbb{R}^d: \sum_{k=1}^d w_k = 1 \right\}
\]
is defining the feasible set of weights; note that we allow short-selling. Given $w\in S$, the corresponding portfolio return could be expressed as $R=\langle w, X\rangle$. For brevity, we use $\Sigma$ and $\mu$ to denote the (unknown) covariance matrix and the expected value of the vector of the rates of return. 
Thus, the total return of the portfolio is a random variable $R:= \sum_{k =1}^{d} w_kX_k$ with mean $\mu^Tw$ and variance $\sigma^2_R:= w^T\Sigma w$.

The classical mean-variance optimization problem that finds the weights connected with the lowest variance of the portfolio with expected profits bigger than the expected (desired) rate of return $c > $ is given by
\begin{align} \label{eq:markowitz:objective}
    & \textrm{ minimize } \frac{1}{2} w^T\Sigma w \nonumber\\
    & \textrm{ subject to } \mu^Tw \geq c,  \textrm{ and } e^Tw = 1,
\end{align}
where $e$ is a d-dimensioned unit vector. If $\Sigma$ and $\mu$ are known, then the problem \eqref{eq:markowitz:objective} can be solved explicitly; see Section~2, Chapter~6 in \cite{elton2014modern} for details. In this case, the optimal solution is represented by a convex combination of two key portfolios, that is the minimum variance portfolio 
and the market portfolio, where
\begin{equation*}
    w_{\textrm{min-var}} = \frac{\Sigma e^{-1}}{e^T\Sigma^-1e},\,
    w_{mk} = \frac{\Sigma^{-1}\mu}{e^T\Sigma^{-1}\mu},  \textrm{ and }
    \alpha = \frac{c - \mu^Tw_{\textrm{min-var}}}{\mu^T(w_{mk} - w_{ \textrm{min-var}})}.
\end{equation*}
The optimal allocation is given by
\begin{equation} \label{eq:w_opt}
    w_{opt} = (1-\alpha)w_{ \textrm{min-var}} + \alpha w_{mk}.
\end{equation}

Since, in practice, we do not know the true distribution of $X$, we need to estimate both $\Sigma$ and $\mu$.
We use $\mathbf{X}$ to represent a specified subset of simple returns of stocks that contribute to our portfolio derived from historical market observations and $\mathbf{Y}$ to denote the sample for a benchmark stock index, aligning this framework with the setup introduced in Section~\ref{section:math}. For instance, $\mathbf{X}$ might correspond to five specific constituents of the S\&P 500 index, while $\mathbf{Y}$ represents the index itself. 
The most common solution to approximate the distribution of $X$ and consequently $R$ is the plug-in approach in which we first estimate $\widehat{\Sigma}$ and $\hat{\mu}$ and later plug those values into \eqref{eq:w_opt} to approximate the solution of \eqref{eq:markowitz:objective}. In the classical setup, unconditional sample estimators are used which results in the optimal allocation given by
 \begin{align}
    \hat{w}&:= (1-\hat{\alpha})\hat{w}_{min-var} + \hat{\alpha}\hat{w}_{mk}, \label{eq:w_opt_est} 
\end{align}
where
\begin{align}
    \hat{w}_{\textrm{min-var}} &:= \frac{\widehat{\Sigma} e^{-1}}{e^T\widehat{\Sigma}^{-1}e}, \textrm{ and }
    \hat{w}_{mk} := \frac{\widehat{\Sigma}^{-1}\hat{\mu}}{e^T\widehat{\Sigma}^{-1}\hat{\mu}}, \textrm{ and } 
    \hat{\alpha} := \frac{c - \hat{\mu}^T\hat{w}_{\textrm{min-var}}}{\hat{\mu}^T(\hat{w}_{mk} - \hat{w}_{\textrm{min-var}})}.
\end{align}
 In this paper, following Section~\ref{section:math}, we introduce alternative estimators based on conditional moments. More specifically, we set
\begin{align}
    \overline{w}&:= (1-\overline{\alpha})\overline{w}_{min-var} + \overline{\alpha}\overline{w}_{mk} \textrm{, where }\\
    \label{eq:w_opt_con_var}
     \overline{w}_{\textrm{min-var}} & := \frac{\overline{\Sigma} e^{-1}}{e^T\overline{\Sigma}^{-1}e},
    \overline{w}_{mk} := \frac{\overline{\Sigma}^{-1}\hat{\mu}_B}{e^T\overline{\Sigma}^{-1}\hat{\mu}_B}, 
    \overline{\alpha} := \frac{c - \hat{\mu}_B^T\overline{w}_{\textrm{min-var}}}{\hat{\mu}_B^T(\overline{w}_{mk} - \overline{w}_{\textrm{min-var}})},
\end{align}
where $B$ corresponds to the middle 60\% observation with respect to the benchmark, $\overline{\Sigma} = \overline{\Sigma}(X)$ is defined as the (unconditional) variance-covariance matrix estimator \eqref{eq:con_covariance_estimator} and $\hat{\mu}_B = \hat{\mu}_B (X)$ is defined in \eqref{eq:cond_exp_emp}. While $ \hat{w}$ is calculated classically with the standard estimators, $ \overline{w}$  calculated based on Equation \eqref{eq:w_opt_con_var} reflects the 20/60/20 rule. It is important to note that if the estimated minimal variance portfolio satisfies the required rate of return $c$, then this portfolio is considered optimal. Specifically, if either $\hat \mu ^T\hat w_{min-var} \geq c$ or $\hat \mu_B ^T\overline w_{min-var} \geq c$, then the respective portfolio weight $\hat w$ and $\overline w$ are equal to $\hat w_{min-var}$ and $\overline w_{min-var}$, respectively.

The approach presented in \eqref{eq:w_opt_con_var} can be viewed in several ways. First, conditioning the distribution of the stock market index can be seen as a way to account for market conditions within our portfolio. Additionally, this approach can be interpreted as an alternative estimator of the covariance matrix in a post-crisis scenario, where historical data contain many outliers that we believe should no longer influence our results. In such a scenario, accounting for excessive correlation could negatively impact the performance of our portfolio.
Since both estimators of the covariance matrix are consistent with a sample from a normal distribution, both estimators should converge to their theoretical values. For completeness, we present a simple proposition, which shows that both methods are assymptotically equivalent under normality assumption.

\begin{proposition}
Let $\mathbf{X}$ be a i.i.d. sample from $X \sim \mathcal{N}_d(\mu,\Sigma)$, $\hat \mu ^T\hat w_{min-var} < c$ and $\hat \mu_B ^T\overline w_{min-var} < c$, then for the quantiles $q = 0.2$, $p = 0.8$ and the corresponding Borel set B we have $\|\hat{w} -\overline{w}\| \overset{\mathbb{P}}{\to} 0$ as $n\to\infty$.
\end{proposition}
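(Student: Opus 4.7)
The plan is to show that all the ingredients entering $\hat{w}$ and $\overline{w}$ converge in probability to the same limiting quantities, so that both weight vectors converge in probability to the true optimal allocation $w_{opt}$ from~\eqref{eq:w_opt}; the triangle inequality then gives $\|\hat{w}-\overline{w}\|\overset{\mathbb{P}}{\to}0$. The conditions $\hat{\mu}^T\hat{w}_{\textrm{min-var}}<c$ and $\hat{\mu}_B^T\overline{w}_{\textrm{min-var}}<c$ guarantee that the piecewise definition of the optimal allocation does not collapse to the minimum-variance portfolio, so that the closed-form expressions \eqref{eq:w_opt_est} and \eqref{eq:w_opt_con_var} are genuinely in play.

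First I would collect the four basic consistency statements that feed the optimizer. The unconditional estimators $\hat{\mu}(\mathbf{X})$ and $\widehat{\Sigma}$ converge in probability to $\mu$ and $\Sigma$ by the law of large numbers. For the conditional quantities, the remark in Section~\ref{s:statistical_setup} (via Proposition~5 in \cite{Wozny2024GaussianDS}) gives $\overline{\Sigma}\overset{\mathbb{P}}{\to}\Sigma$ under normality, since Theorem~\ref{th:OurTh} provides precisely the identity that $\overline{\Sigma}$ is designed to approximate. The slightly subtler point is that $\hat{\mu}_B(\mathbf{X})\overset{\mathbb{P}}{\to}\mu$: for $p=0.2$, $q=0.8$ the conditioning set $B$ is symmetric around the median of the benchmark, and $(X,Y)$ is jointly normal, so by the symmetry argument $\mathbb{E}[X\mid Y\in B]=\mu$; consistency of the sample version then follows from the same classical arguments as for $\hat{\mu}(\mathbf{X})$ combined with the convergence of empirical quantiles of $\mathbf{Y}$ to the population quantiles.

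Next I would apply the continuous mapping theorem and Slutsky's theorem to propagate these limits through the closed-form optimizer. Since $\Sigma$ is strictly positive definite (hence invertible on an open neighbourhood in matrix space) and the scalars $e^T\Sigma^{-1}e$ and $e^T\Sigma^{-1}\mu$ are nonzero, the maps $\Sigma\mapsto \Sigma^{-1}e/(e^T\Sigma^{-1}e)$ and $(\Sigma,\mu)\mapsto \Sigma^{-1}\mu/(e^T\Sigma^{-1}\mu)$ are continuous at $(\Sigma,\mu)$. This yields
\begin{equation*}
\hat{w}_{\textrm{min-var}}\overset{\mathbb{P}}{\to} w_{\textrm{min-var}},\qquad \overline{w}_{\textrm{min-var}}\overset{\mathbb{P}}{\to} w_{\textrm{min-var}},
\end{equation*}
and analogously $\hat{w}_{mk},\overline{w}_{mk}\overset{\mathbb{P}}{\to}w_{mk}$. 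The assumption $\hat{\mu}^T\hat{w}_{\textrm{min-var}}<c$ (and its conditional analogue) together with $\hat{\mu}(\mathbf{X})\to \mu$, $\hat{\mu}_B(\mathbf{X})\to \mu$ implies that $\mu^T w_{\textrm{min-var}}\le c$, so the denominator $\mu^T(w_{mk}-w_{\textrm{min-var}})$ is nonzero (otherwise the problem would be infeasible or degenerate), making the scalar $\alpha$ well defined; continuity of $\alpha$ in its arguments then gives $\hat{\alpha},\overline{\alpha}\overset{\mathbb{P}}{\to}\alpha$.

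Finally, combining the convex combinations via Slutsky's theorem yields $\hat{w}\overset{\mathbb{P}}{\to}w_{opt}$ and $\overline{w}\overset{\mathbb{P}}{\to}w_{opt}$, and $\|\hat{w}-\overline{w}\|\le \|\hat{w}-w_{opt}\|+\|\overline{w}-w_{opt}\|\overset{\mathbb{P}}{\to}0$. The main obstacle is the step $\hat{\mu}_B\overset{\mathbb{P}}{\to}\mu$: the conditional sample mean is taken over a data-dependent set $\hat{B}$, so one must argue carefully that (i) the population conditional mean equals $\mu$ by the symmetry of the normal law around the symmetric quantile window $B$, and (ii) the replacement of $B$ by $\hat{B}$ costs only an $o_p(1)$ error, which is the standard argument underlying the consistency results cited from \cite{Wozny2024GaussianDS}. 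Everything else is a routine application of continuous mapping and Slutsky arguments once invertibility of $\Sigma$ and non-vanishing of the relevant scalar denominators are secured by the standing assumptions of the proposition.
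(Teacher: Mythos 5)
Your proposal is correct and follows essentially the same route as the paper's own proof: consistency of $\hat{\mu}$, $\widehat{\Sigma}$, $\hat{\mu}_B$, and $\overline{\Sigma}$ (the latter two via Proposition 5 of \cite{Wozny2024GaussianDS}), the symmetry observation that the conditional mean over the symmetric quantile window equals $\mu$, propagation through the closed-form optimizer by continuity, and the final triangle inequality against the common limit $w_{opt}$. Your version is somewhat more careful than the paper's in spelling out why the continuous-mapping step is legitimate (invertibility of $\Sigma$ and non-vanishing denominators) and in flagging the data-dependent conditioning set $\hat{B}$ as the delicate point, but these are refinements of, not departures from, the published argument.
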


\begin{proof}
    First, note that by the Proposition 5 in \cite{Wozny2024GaussianDS}, the random variables $\overline{\Sigma}$, $\hat{\Sigma}$, $\hat{\mu}$, and $\hat{\mu}_B$ are consistent estimators of $\Sigma$, $\Sigma$, $\mu$, and $\mu_B$, respectively. Additionally, since $X$ is symmetric and $q = 1-p$, we have $\mu = \mu_B$.
    Because $\hat{w}$ and $\bar{w}$ are calculated as continuous functions of $\hat\Sigma$, $\bar\Sigma$,  $\hat\mu$, and $\hat\mu_B$, both $\overline{w}$ and $\hat{w}$ are also consistent. Thus, noting that
    \begin{equation}
    \|\overline{w} - \hat{w}\| = \|(\overline{w} -w) - (\hat{w} -w\|) \leq \|\overline{w} -w \| + \|\hat{w} -w\|
    \end{equation}
and recalling $\|\overline{w} -w \|\overset{\mathbb{P}}{\to} 0$ and $\|\hat{w} -w \| \overset{\mathbb{P}}{\to} 0$, we conclude the proof.
\end{proof}

\subsection*{Performance evaluation metric}

Our primary evaluation metric is the classical Sharpe Ratio (SR), which measures the risk-adjusted return of a portfolio. Given a sequence of portfolio allocations $\{w_t\}$ over a sample of length $T$, we obtain a corresponding sequence of realized portfolio returns $\{R_t\}_{t=1}^{T}$. The Sharpe Ratio is formally defined as a function of the return sequence $(R_t)$, given by
\begin{align}
\label{eq:SR}
SR(R_t) := \frac{\mu_R}{\sigma_R} \times \sqrt{250},
\end{align}
where $\mu_R$ represents the sample mean return over the period $T$
\begin{align}
\mu_R(R_t) = \frac{1}{T} \sum_{t=1}^T R_t,
\end{align}
and $\sigma_R$ is the standard deviation of returns, given by
\begin{align}
\sigma_R(R_t) = \sqrt{\frac{1}{T} \sum_{t=1}^T (R_t - \mu_R)^2}.
\end{align}
The factor 250 annualizes the number of trading days in a year. This metric provides a standardized measure of portfolio performance by balancing returns against associated risk, see \cite{sharpe1964capm} for further discussion. In our analysis, we compare the Sharpe Ratios of two different portfolio return sequences, that is $(\tilde{R}_t)$, which corresponds to the returns generated by the modified approach based on the 20/60/20 rule, and $(\bar{R}_T)$, which represents returns from the classical Markowitz optimization. Specifically, we compute and contrast $SR(\tilde R_t)$ and $SR(\bar R_t)$ to evaluate whether incorporating the central spatial-set in portfolio construction leads to improved risk-adjusted performance. 
\subsection*{Backtesting Design}
To compare the proposed method with the classical counterpart, we employ real financial market data over a various historical periods 01.01.2014--01.01.2024, 01.01.2019--01.01.2024 and 01.01.2014--01.01.2019. Our evaluation framework employs a rolling-window backtesting approach to estimate portfolio weights for both methodologies. Specifically, we utilize a learning period of $t=120$ trading days, during which the estimators of portfolio weights (denoted $\bar w$ for the proposed method and $\hat w$ for the classical approach) are computed. These weights are subsequently applied to the Markowitz portfolio optimization strategy. To address market non-stationarity, we maintain each set of portfolio weights for a fixed evaluation period of $l = 60$ trading days, after which they are re-estimated using the most recent 120-day rolling window. For each strategy, the realized portfolio returns are computed over the entire evaluation period.

To establish a dynamic benchmark for evaluating portfolio performance, we define the expected daily rate of return threshold $c_i$ at at each iteration $i$ as $c_i = \max(3 \mu_{Y,i},0.0005),$ where $\mu_{Y,i}$ represents the average daily return of a reference market index computed over the most recent backtesting sample at the iteration $i$. This threshold ensures that only strategies with a theoretically significant outperformance relative to the index are considered, while maintaining a slightly positive benchmark during economic downturns. Our analysis might be summed up in a simple algorithm presented in Figure~\ref{fig:backtest_algo}.

\begin{figure}[H]
    \centering
    \begin{algorithmic}[1]
        \State \textbf{Input:} Historical market data i.e. $\mathbf{X}$ and $\mathbf{Y}$, learning period $t$, investment period $l$ and $I$ number of learning periods
        \For{each iteration \( i \) to $I$} 
            \State Using \eqref{eq.standard.mean}  and \eqref{eq:standard.covariance.est}, estimate $\widehat \mu$, $\widehat \Sigma$ and $\mu_{Y,i}$ over \( t \) days
            \State Compute the desired expected return \( c_i = \max(3 \mu_{Y,i}, 0.0005) \)        
            \State Using \eqref{eq:cond_exp_emp} and \eqref{eq:cond_cov_emp_X}, estimate 
            $\widehat \mu_B$ and $\widehat{\Sigma}_B$ over $t$ days
            \State Use Theorem~\ref{th:OurTh} to reconstruct the unconditional covariance matrix $\overline \Sigma$ (Formula~\eqref{eq:variance_estimator})
            \State Using equations~\eqref{eq:w_opt_est} and \eqref{eq:w_opt_con_var}, compute portfolio weights \( \hat{w} \) and \( \overline{w} \)
            \State Invest according to the plug-in Markowitz framework for both estimators over \( l\) days
        \EndFor
    \end{algorithmic}
    \caption{Rolling-window backtesting procedure for portfolio optimization}
    \label{fig:backtest_algo}
\end{figure}

\subsection*{Market data analysis}

For both estimators, we pick six sets of randomly selected companies from the S\&P 500 index. The list of companies in each basket is presented in Table~\ref{table:BasketNames}. We consider three periods of investment: $01.01.2014-01.01.2024$, $01.01.2019-01.01.2024$, and $01.01.2014-01.01.2019$. In Table~\ref{tableBasket}, we present the Sharp Ratio for each basket and each period. The symbol "M" indicates the classical Markowitz estimator based on the unconditional sample covariance estimator and "CM" indicates the Markowitz estimator based on the rescaled conditional variance matrix. Results with the best performance are bolded.

\begin{table}[htp!]
\begin{tabular}{c|cccccccccc}
              & 1    & 2    & 3    & 4    & 5    & 6     & 7   & 8    & 9   & 10   \\ \hline
Basket 1 (B1) & JNJ  & PG   & VZ   & KO   & DIS  & PFE   &     &      &     &      \\
Basket 2 (B2) & MSFT & AAPL & NVDA & AMZN & META & GOOGL & HCA & TSLA & JPM & NFLX \\
Basket 3 (B3) & ZTS  & SYK  & PLD  & CMR  & GS   &       &     &      &     &      \\
Basket 4 (B4) & BKNG & MAR  & NKE  & DIS  & MDT  & XOM   & PEP &      &     &      \\
Basket 5 (B5) & RTX  & INTU & DECK & USB  & GILD & LMT   &     &      &     &      \\
Basket 6 (B6) & TMO  & LMT  & TT   & CAT  &      &       &     &      &     &     
\end{tabular}
\caption{The table presents the companies included in particular baskets.}
\label{table:BasketNames}
\end{table}

\begin{table}[htp!]
\begin{tabular}{c|cc|cc|cc|}
   & \multicolumn{2}{c|}{01.01.2014-01.01.2024}                           & \multicolumn{2}{c|}{01.01.2019-01.01.2024}                           & \multicolumn{2}{c|}{01.01.2014-01.01.2019}                           \\ \cline{2-7} 
   & \multicolumn{1}{c}{M} & \multicolumn{1}{c|}{CM} & \multicolumn{1}{c}{M} & \multicolumn{1}{c|}{CM} & \multicolumn{1}{c}{M} & \multicolumn{1}{c|}{CM} \\ \hline
B1 & 0.41                           & \textbf{0.76}                       & 0.27                           & \textbf{0.53}                       & 0.57                           & \textbf{1.02}                       \\ \hline
B2 & 1.03                           & \textbf{1.14}                       & 0.91                           & \textbf{1.15}                       & \textbf{1.27}                  & 1.26                                \\ \hline
B3 & 0.39                           & \textbf{0.61}                       & 0.33                           & \textbf{0.41}                       & 0.34                           & \textbf{0.94}                       \\ \hline
B4 & \textbf{0.73}                  & \textbf{0.73}                       & \textbf{0.54}                  & 0.52                                & 1.01                           & \textbf{1.12}                       \\ \hline
B5 & 0.42                           & \textbf{0.75}                       & 0.45                           & \textbf{0.73}                       & \textbf{0.64}                  & 0.57                                \\ \hline
B6 & \textbf{0.76}                  & 0.62                                & 0.44                           & \textbf{0.86}                       & \textbf{1.12}                  & 0.94                                \\ \hline
\end{tabular}
\caption{The table presents the Sharpe Ratio results for portfolios of six stock baskets picked randomly over different investment horizons. The strategy based on the assumption of normality in the central set generally
yields significantly better results (bolded) than its classical counterpart in most of the cases considered.}
\label{tableBasket}
\end{table}

The results, presented in Table~\ref{tableBasket}, demonstrate that the strategy based on the reconstructed covariance matrix generally outperforms its classical counterpart in most of the cases considered. Specifically, the proposed method achieved superior results in 12 out of 18 cases, with performance nearly doubling in two instances. Notably, even in cases where the classical method (M) outperformed the proposed approach (CM), the differences were relatively small. 
These findings support the applicability of the 20/60/20 rule in finance, which may offer advantages over using variance estimated from unconditional data. The latter approach can introduce issues such as risk spillovers, distorted correlations, and reduced control over the optimization process. In particular, the optimal allocation obtained using the CM model delivered robust performance during the 2019-–2024 period—a time characterized by major crises, including the COVID-19 pandemic and Russia's invasion of Ukraine. This suggests that conditioning on the benchmark does not disrupt key market relationships but rather preserves and accurately reflects the underlying dynamics, which tend to become more pronounced during periods of financial turbulence.
These findings support the applicability of the 20/60/20 rule in finance, which may offer advantages over using variance estimated from unconditional data. The latter approach can introduce issues such as risk spillovers, distorted correlations, and reduced control over the optimization process. In particular, the optimal allocation obtained using the CM model delivered robust performance during the 2019-–2024 period—a time characterized by major crises, including the COVID-19 pandemic and Russia's invasion of Ukraine. This suggests that conditioning on the benchmark does not disrupt key market relationships but rather preserves and accurately reflects the underlying dynamics, which tend to become more pronounced during periods of financial turbulence.

\begin{figure}[htp!]
\begin{center}
\includegraphics[width=0.32\textwidth]{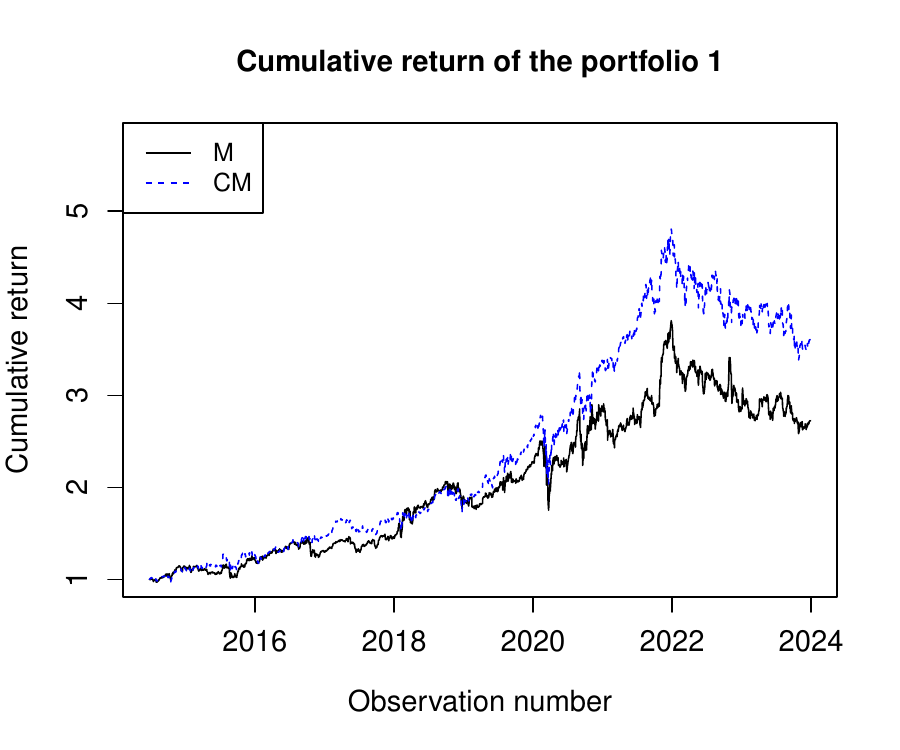}
\includegraphics[width=0.32\textwidth]{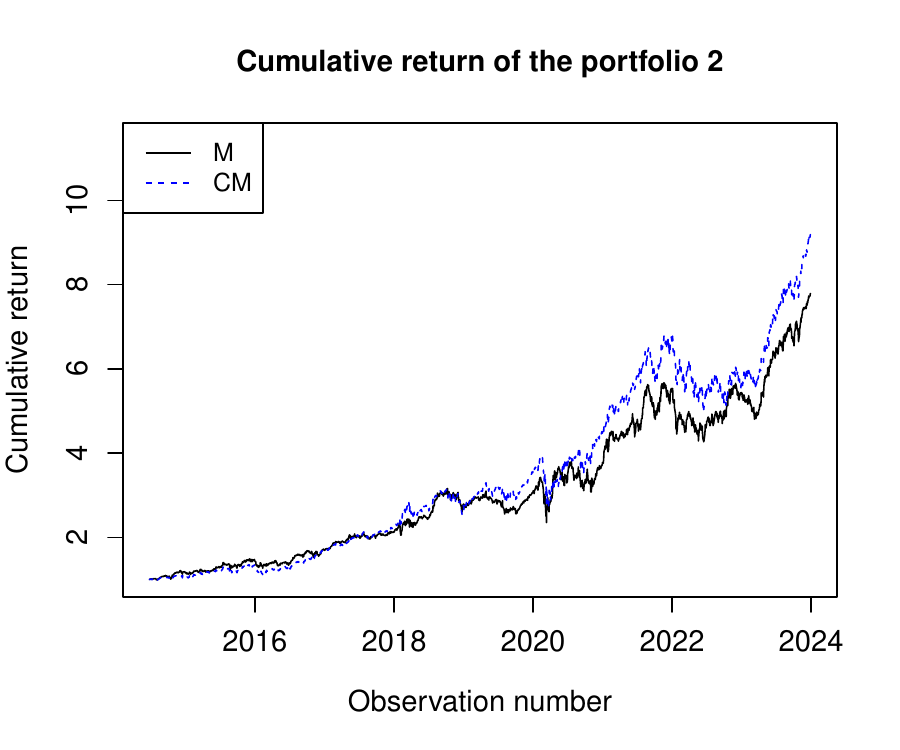}
\includegraphics[width=0.32\textwidth]{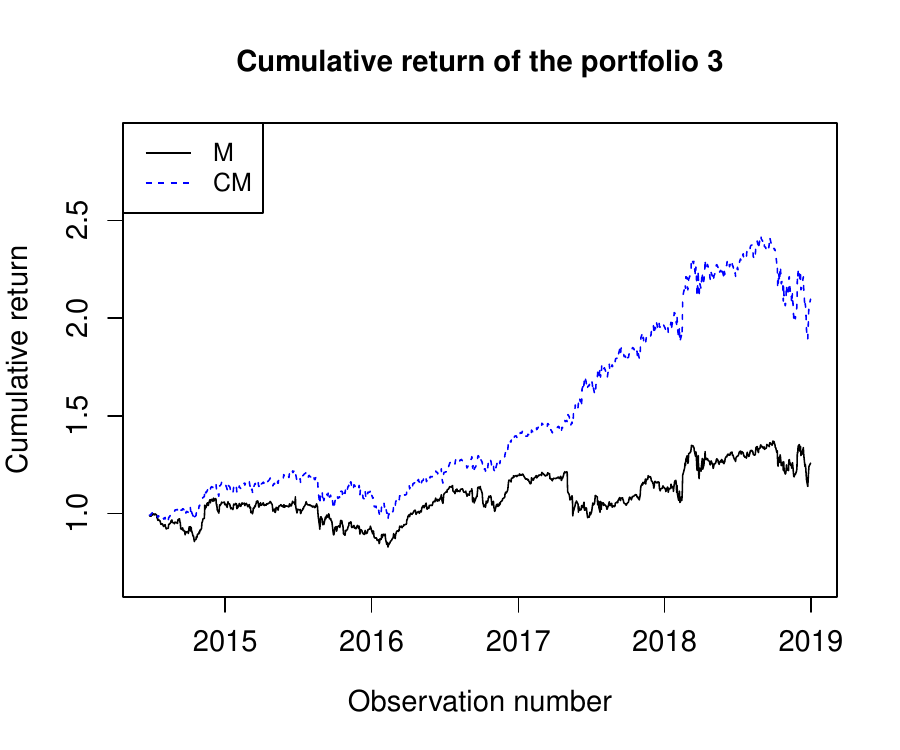}
\includegraphics[width=0.32\textwidth]{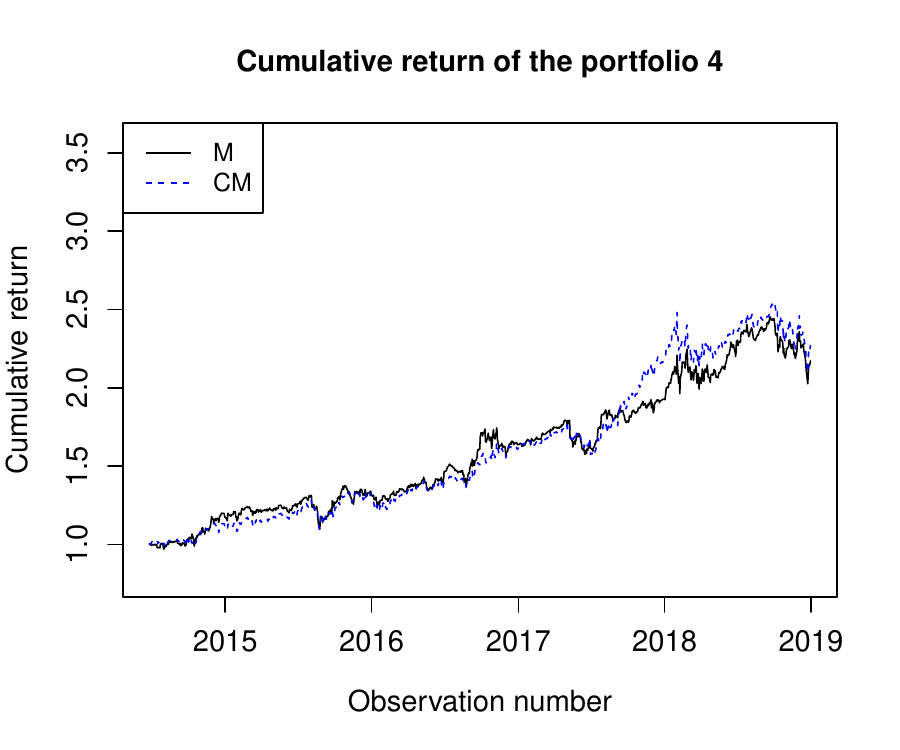}
\includegraphics[width=0.32\textwidth]{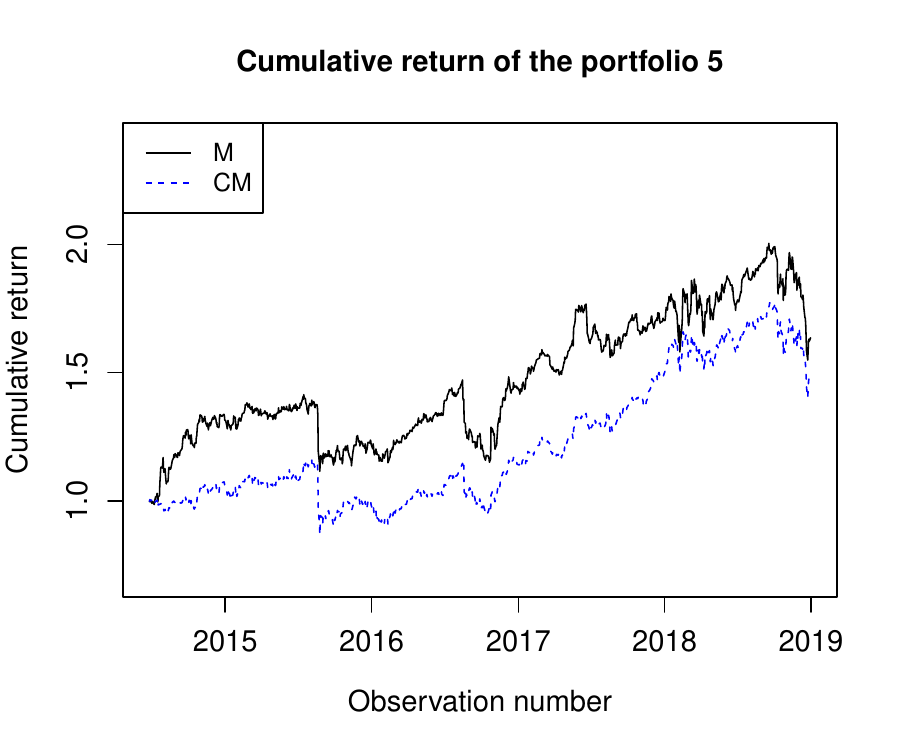}
\includegraphics[width=0.32\textwidth]{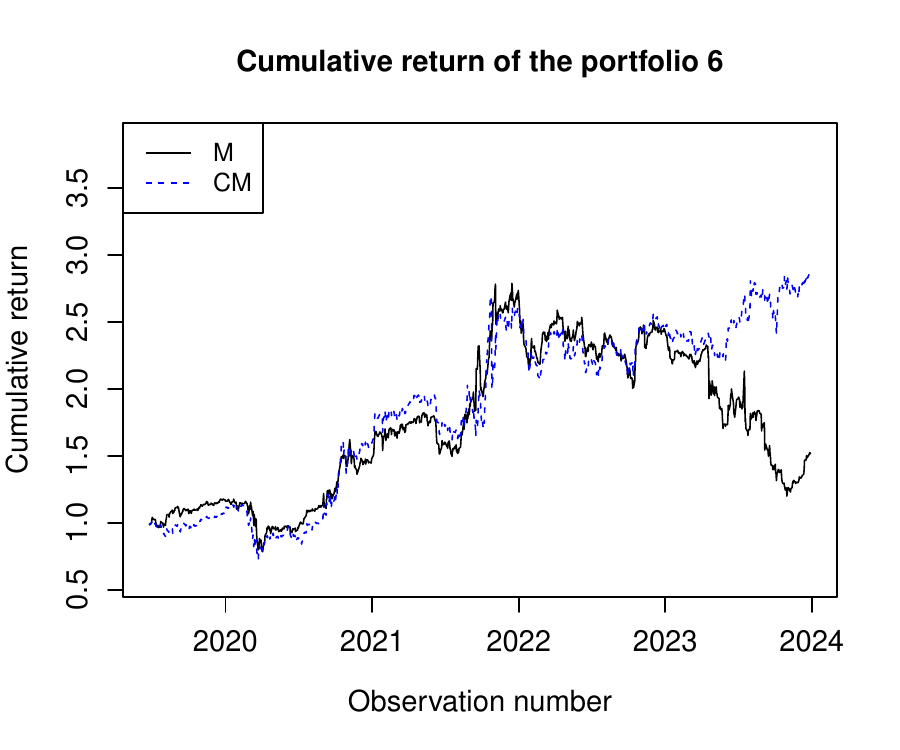}

\end{center}
\caption{The figure shows the cumulative return over the validation period $t=120$ for the M strategy (black line) and the CM applied to the reconstructed normal distribution (blue dashed line) across various date selections and two stock choices. Notably, we observe a tendency for the curves to diverge, suggesting that the proposed method is structurally more robust. However, there are scenarios where sudden performance drops occur (e.g., bottom row, second column), followed by a recovery phase where the proposed strategy outperforms the classical Markowitz approach. This behavior may stem from instability in the threshold $\bar \mu$, which could temporarily affect the performance before stabilizing in favor of the proposed method.}
\label{fig_result_portfolio}
\end{figure}

In order to understand why the CM approach outperforms the classical Markowitz strategy we decided to look at cumulative return plots that are presented in Figure~\ref{fig_result_portfolio} for exemplary (representative) datasets.
 A noticeable divergence in the curves indicates that the reconstructed distribution tends to be more robust. However, specific scenarios exhibit abrupt performance drops (e.g., bottom row, second column) followed by a recovery phase where the reconstructed strategy surpasses the classical Markowitz approach. This phenomenon may arise due to a temporary instability in the threshold parameter $c$, which initially affects performance, but stabilizes over time to favor the reconstructed strategy. Nevertheless, the simplified conditional method presented in this paper might be enhanced to account for such phenomena. 
 
 We consider this strategy as a preliminary introduction to a potential further analysis. We acknowledge inherent simplifications, such as survival bias and omission of transaction costs. More advanced models, such as CAPM or Black-Litterman; see \cite{ sharpe1964capm,black1992}, account for market conditions and expert opinions, providing deeper insights. Furthermore, our approach primarily clusters data in space; a robust system should also diagnose crisis environments and adapt its decisions accordingly. However, our analysis suggests that the 20/60/20 rule may have promising applications in portfolio optimization, paving the way for future developments in this field.

 \section{Conclusions}
 \label{s:conclusions}
 This study highlights the potential of the 20/60/20 rule as a valuable tool in quantitative finance. By segmenting data into top-performing, average-performing, and underperforming groups, we demonstrate its applicability across multiple financial domains, including risk management, portfolio optimization, and statistical modeling of financial returns.

First, we show that applying the 20/60/20 rule to stock market data enables efficient population clustering, which can improve the identification of distinct performance groups. This segmentation provides a structured approach to analyzing financial data, particularly in scenarios where market heterogeneity plays a critical role. Second, we introduce a novel metric for assessing tail heaviness, derived using conditional statistical methods. This metric offers an easy-to-implement tool for characterizing extreme risks in financial datasets, contributing to more robust risk management strategies. Third, we integrate the 20/60/20 framework into the classical Markowitz portfolio optimization model, demonstrating its ability to enhance asset allocation and improve overall portfolio performance. Our findings indicate that this segmentation approach balances risk and return by capturing extreme movements while leveraging the stability of core market returns.

Beyond its relevance to normal distributions, the 20/60/20 rule demonstrates potential applications in financial mathematics, offering a foundation for further research on its integration into advanced financial models. Our analysis shows that observations from the middle part of the distribution often exhibit behavior similar to that of a normal distribution, which can enhance models relying on the covariance matrix. Additionally, we explored the preliminary potential of a new normality test based on the relationship between central observations and the total distribution. Potential extensions include its use in time clustering methods, dynamic portfolio rebalancing strategies, and alternative risk assessment frameworks. While the rule simplifies real-world trading algorithms and introduces a structured yet flexible approach, tailoring strategies to specific data segments can significantly improve investment outcomes. However, further refinement and empirical validation across different asset classes and market conditions remain important directions for future work.

\section*{Acknowledgements}
Marcin Pitera and Agnieszka Wy\l{}oma\'{n}ska acknowledge support from the National Science Centre, Poland, via project 2024/53/B/HS4/00433. Damian Jelito acknowledges support from the National Science Centre, Poland, via project 2024/53/B/ST1/00703.

\begin{footnotesize}
\bibliographystyle{agsm}

\end{footnotesize}

\end{document}